\def\beq{\begin{equation}}
\def\eeq{\end{equation}}
\def\bea{\begin{eqnarray}}
\def\eea{\end{eqnarray}}
\def\wx{\widetilde x}
\def\ep{\varepsilon}
\def\tilde{\widetilde}
\def\epsilon{\varepsilon}
\newtheorem{theorem}{Theorem}
\newtheorem{fact}{Fact}
\let\expandafter
\def\subeqnarray{\arraycolsep1pt
    \def\@eqnnum\stepcounter##1{\stepcounter{subequation}
        {\reset@font\rm(\theequation\alph{subequation})}}
\jot5mm     \eqnarray}
\newcommand{\bbZ}{{\mathbb Z}}
\newcommand{\bbR}{{\mathbb R}}
\newcommand{\bbP}{{\mathbb P}}
\newcommand{\bbQ}{{\mathbb Q}}
\newcommand{\bbC}{{\mathbb C}}
\newcommand{\cO}{{\mathcal O}}
\newbox\meibox
\def\placeunder#1#2#3#4{\setbox\meibox%
\vbox{\hbox{\hskip#4$\hphantom{#2}$}\hbox{$\hphantom{#1}$}}%
\vtop{\baselineskip=0pt\lineskiplimit=\baselineskip%
\lineskip=#3\hbox to \wd\meibox{\hfil\hskip#4$#2$\hfil}%
\hbox to \wd\meibox{\hfil$#1$\hfil}}}
\begin{document}

\title[On elliptic solutions of integrable birational maps]
{On the construction of elliptic solutions of integrable birational maps}
\author[M. Petrera \and A. Pfadler \and Yu. Suris (with Appendix by Yu. Fedorov)]{Matteo Petrera \and Andreas Pfadler \and Yuri B. Suris \\ (with Appendix by Yuri N. Fedorov)}

\thanks{M. Petrera, A. Pfadler, Yu. Suris: Institut f\"ur Mathematik, MA 7-2,
Technische Universit\"at Berlin, Str. des 17. Juni 136, 10623 Berlin, Germany.}
\thanks{E-mail: {\tt  petrera@math.tu-berlin.de, andreas.pfadler@gmail.com, suris@math.tu-berlin.de}}

\maketitle

\begin{center}\today\end{center}

\begin{abstract}
We present a systematic technique to find explicit solutions of birational maps, provided that these solutions are given in terms of elliptic functions. The two main ingredients are: (i) application of classical addition theorems for elliptic functions, and (ii) experimental technique to detect an algebraic curve containing a given sequence of points in a plane. These methods are applied to Kahan-Hirota-Kimura discretizations of the periodic Volterra chains with 3 and 4 particles.

\end{abstract}

\section{Introduction}

This paper is devoted to finding solutions of integrable birational maps in terms of elliptic functions. The similar task for integrable ordinary differential equations was a popular and well-developed topic in the classical mathematics of 18-th and 19-th century. We mention here one of the most famous and prototypical examples, the Euler top. Differential equations of motion of the Euler top read
\begin{equation}\label{eq: ET x}
\left\{ \begin{array}{l}
\dot{x}_1=c_1 x_2 x_3, \vspace{.1truecm} \\
\dot{x}_2=c_2 x_3 x_1, \vspace{.1truecm} \\
\dot{x}_3=c_3 x_1x_2,
\end{array} \right.
\end{equation}
with real parameters $c_i$. They admit two functionally independent integrals of motion: one easily checks that the following three functions are integrals:
\begin{equation}\label{eq: ET H}
H_1=c_2x_3^2-c_3x_2^2,\qquad
H_2=c_3x_1^2-c_1x_3^2,\qquad
H_3=c_1x_2^2-c_2x_1^2,
\end{equation}
but only two of them are functionally independent because of $c_1H_1+c_2H_2+c_3H_3=0$. Using these integrals, one easily sees that the coordinates $x_j$ satisfy the following differential equations:
\begin{equation}\nonumber
\left\{ \begin{array}{l}
\dot{x}_1^2 = (H_3+c_2x_1^2)(c_3x_1^2-H_2),\vspace{.1truecm}\\
\dot{x}_2^2= (H_1+c_3x_2^2)(c_1x_2^2-H_3),\vspace{.1truecm}\\
\dot{x}_3^2 = (H_2+c_1x_3^2)(c_2x_3^2-H_1).
\end{array} \right.
\end{equation}
The fact that the polynomials on the right-hand sides of these equations are of degree four implies that the solutions are given by elliptic functions, as explained in the following classical theorem, see, e.g.,  \cite{Hal}.

\begin{theorem} \label{thm:Halphen}
The general solution of the scalar differential equation
\begin{equation}\label{eq: Halphen polynom}
\dot{x}^2=\alpha_0x^4+4\alpha_1 x^3 + 6\alpha_2x^2+4\alpha_3x+\alpha_4,\qquad
\alpha_i \in \mathbb{C},
\end{equation}
is given by the (time shifts of the) second order elliptic function
\begin{equation}\nonumber
x(t)=-\frac{\alpha_1}{\alpha_0} + \zeta(u+v)-\zeta(u)-\zeta(v)=-\frac{\alpha_1}{\alpha_0} + \frac{1}{2}\,\frac{\wp'(u)-\wp'(v)}{\wp(u)-\wp(v)},
\end{equation}
where $u=\sqrt{\alpha_0} \, t$,
while the point $v$ of the corresponding elliptic curve is determined by the relations
\begin{equation}\nonumber
\wp(v)=\frac{\alpha_1^2-\alpha_0 \alpha_2}{\alpha_0},\qquad \wp'(v)=\frac{\alpha_3 \alpha_0^2 - 3\alpha_0 \alpha_1 \alpha_2 + 2\alpha_1^3}{\alpha_0^3}.
\end{equation}
Here the invariants of the Weierstrass $\wp$-function are given by
\begin{equation}\nonumber
g_2=\frac{\alpha_0 \alpha_4 - 4\alpha_1\alpha_3 +3\alpha_2^2}{\alpha_0^2},
\qquad g_3=\frac{\alpha_0 \alpha_2\alpha_4 +2\alpha_1 \alpha_2 \alpha_3   - \alpha_2^3-\alpha_0 \alpha_3^2 - \alpha_1^2 \alpha_4} {\alpha_0^3}.
\end{equation}
\end{theorem}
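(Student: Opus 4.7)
I would reduce (\ref{eq: Halphen polynom}) to a canonical depressed form and then verify the stated ansatz on that form via the addition theorems for $\wp$ and $\zeta$. The shift $x = y - \alpha_1/\alpha_0$ kills the cubic term and the rescaling $u = \sqrt{\alpha_0}\,t$ normalises the leading coefficient; a straightforward polynomial expansion rewrites the equation as
\[
\left(\frac{dy}{du}\right)^{\!2} = y^4 + 6A\,y^2 + 4B\,y + C,
\]
with $A,B,C$ explicit in the $\alpha_i/\alpha_0$. The theorem is then equivalent to the identifications $A = -\wp(v)$, $B = \wp'(v)$, $C = g_2 - 3\wp(v)^2$, which reproduce the stated formulas for $\wp(v)$, $\wp'(v)$, $g_2$; the expression for $g_3$ is forced by the Weierstrass cubic $(\wp'(v))^2 = 4\wp(v)^3 - g_2\wp(v) - g_3$.

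\emph{Verification via addition theorems.} For $y(u) = \zeta(u+v) - \zeta(u) - \zeta(v)$, the identity $\zeta' = -\wp$ gives $y'(u) = \wp(u) - \wp(u+v)$. The classical $\wp$-addition formula
\[
\wp(u+v) + \wp(u) + \wp(v) = \tfrac14\!\left(\frac{\wp'(u) - \wp'(v)}{\wp(u) - \wp(v)}\right)^{\!2} = y(u)^2
\]
(which is itself the content of the second displayed expression for $y$ in the theorem) yields the compact first-order relation $y' = 2\wp(u) + \wp(v) - y^2$. At the same time, the explicit form of $y$ rearranges to $\wp'(u) = 2y\bigl(\wp(u) - \wp(v)\bigr) + \wp'(v)$.

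\emph{Elimination of $\wp(u)$ and main obstacle.} Squaring the last relation and using the Weierstrass cubic at $u$ and at $v$, I would subtract the two instances, factor the cubic difference $\wp(u)^3 - \wp(v)^3 = (\wp(u)-\wp(v))(\wp(u)^2 + \wp(u)\wp(v) + \wp(v)^2)$, and divide through by $\wp(u) - \wp(v)$. What remains is linear in $\wp(u)$, and substituting $\wp(u) = \tfrac12\bigl(y' + y^2 - \wp(v)\bigr)$ from the preceding step collapses everything to exactly $(y')^2 = y^4 - 6\wp(v)y^2 + 4\wp'(v)y + g_2 - 3\wp(v)^2$, which matches the depressed form. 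This single elimination is the technical heart of the proof: a priori $(\wp'(u))^2 = 4\wp(u)^3 - g_2\wp(u) - g_3$ is cubic in $\wp(u)$ and would threaten a higher-order constraint, but the factorisation of the cubic difference precisely matches the first-order nature of the ODE for $y$, so the cancellation goes through cleanly. Once this is secured, generality of the solution (one free parameter, the time shift $u\mapsto u+u_0$) follows from Cauchy--Lipschitz away from the zeros of the quartic, so the stated family exhausts all solutions.
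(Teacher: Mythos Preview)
The paper does not prove this theorem; it is stated as a classical result with a reference to Halphen's treatise \cite{Hal}, and is used subsequently only as a black box. So there is no ``paper's own proof'' to compare against.

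That said, your argument is correct and is essentially the classical derivation. The reduction to the depressed form is routine, and the core identity
\[
(y')^2=y^4-6\wp(v)\,y^2+4\wp'(v)\,y+g_2-3\wp(v)^2
\]
for $y(u)=\zeta(u+v)-\zeta(u)-\zeta(v)$ follows exactly as you outline: from $y'=2\wp(u)+\wp(v)-y^2$ and $\wp'(u)=2y(\wp(u)-\wp(v))+\wp'(v)$, squaring the latter, subtracting the Weierstrass cubic at $v$, dividing by $\wp(u)-\wp(v)$, and substituting for $\wp(u)$ does collapse to precisely this quartic. Your remark that the cubic factorisation is what makes the elimination first-order in $\wp(u)$ is the key observation. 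One minor point: the appeal to Cauchy--Lipschitz for generality is fine on the open set where the quartic does not vanish, but you should note that the time shift together with the freedom in the choice of branch of the square root (equivalently, $u\mapsto -u$) accounts for both sheets of $\dot x=\pm\sqrt{\cdots}$, so that the one-parameter family really is the general solution.
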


We recall that the order of an elliptic function is, by definition, the number of its zeros (or poles) in an arbitrary parallelogram of periods, counted with multiplicities. While for a mathematician of the 19-th century it was quite a routine task to derive a differential equation of the type (\ref{eq: Halphen polynom}) for dependent variables of integrable systems, it seems that nothing comparable to that has been developed for discrete time integrable systems which only became popular more recently, after the advent of the modern theory of integrable systems (also known as the theory of solitons). In the few number of examples, where the exact solutions of integrable maps were actually computed, the following classical result going back to L. Euler was used. (One of the first known applications of this result to the theory of discrete integrable systems is due to R. Baxter \cite{Ba}.)

\begin{theorem}\label{thm: biquad}
Let $P(x,\wx)$ be an irreducible symmetric biquadratic polynomial over $\bbC$. Then the algebraic curve
$
\mathcal{C} = \bigl\{ (x,\wx) \in \bbC^2  :P(x,\wx) = 0\bigr\}
$
has genus 1 and may be parametrized as $(x,\wx)=(f(t),f(t+\delta))$,
with a second order elliptic function $f$ and some shift $\delta \in \bbC$. Conversely, for an elliptic function $f$ of order 2 and for an arbitrary $\delta \in \bbC$, $x=f(t)$ and $\wx = f(t+\delta)$ satisfy an algebraic relation of the form
$P(x,\wx) = 0$, where $P$ is an irreducible symmetric biquadratic polynomial.
\end{theorem}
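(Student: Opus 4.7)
The plan is to realize $\mathcal{C}$ as an elliptic curve and exploit the symmetry of $P$ via an involution argument on its uniformization. Writing $P(x,\wx)=a_0(x)\wx^2+a_1(x)\wx+a_2(x)$ with $\deg a_i\le 2$, the projection $\mathcal{C}\to\bbP^1$, $(x,\wx)\mapsto x$, is a double cover ramified at the zeros of the discriminant $\Delta(x)=a_1(x)^2-4a_0(x)a_2(x)$, a polynomial of degree $4$ in the projective coordinate $x$. Irreducibility of $P$ forces these four branch points to be distinct, so Riemann--Hurwitz yields $g(\mathcal{C})=1$. Uniformizing $\mathcal{C}$ as $\bbC/\Lambda$, both coordinates become meromorphic functions on the torus, $x=f(t)$ and $\wx=g(t)$, each of order $2$, since each of the two projections $\mathcal{C}\to\bbP^1$ has degree $2$.

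It remains to identify $g(t)$ with $f(t+\delta)$ for a suitable shift. The symmetry of $P$ turns the swap $\sigma:(x,\wx)\mapsto(\wx,x)$ into an involution of $\mathcal{C}$, which lifts to a holomorphic involution $\tau$ of $\bbC/\Lambda$ with $f\circ\tau=g$. Every holomorphic involution of an elliptic curve has the form $\tau(t)=t+c$ with $2c\in\Lambda$, or $\tau(t)=c-t$. In the first case $\delta:=c$ works directly. In the second, one invokes the structural fact that every second-order elliptic function is even about some point $t_*$: the covering involution of any degree-$2$ map $f:\bbC/\Lambda\to\bbP^1$ must have fixed points (otherwise the quotient would still be an elliptic curve, not $\bbP^1$), hence it is of the form $t\mapsto 2t_*-t$, so $f(2t_*-t)=f(t)$. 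After translating the origin to $t_*$ we may assume $f$ is even, and then $g(t)=f(c-t)=f(t-c)$, so the choice $\delta:=-c$ produces the required form.

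For the converse, any two non-constant meromorphic functions on $\bbC/\Lambda$ satisfy a single polynomial relation; applied to $f(t)$ and $f(t+\delta)$ this yields $P(x,\wx)=0$ with $P$ of bidegree $(2,2)$, by counting preimages (for generic $\wx$, the equation $f(t+\delta)=\wx$ has two solutions in $t$, giving two corresponding values of $x=f(t)$). Using the evenness of $f$ about $t_*$, the involution $\tau(t)=2t_*-\delta-t$ satisfies $f(\tau(t))=f(t+\delta)$ and $f(\tau(t)+\delta)=f(t)$, so the image curve is preserved by $\sigma$, and its defining polynomial $P$ may be chosen symmetric (the alternative $P(\wx,x)=-P(x,\wx)$ would force $(x-\wx)\mid P$, contradicting irreducibility of $P$, which in turn is automatic since the parametrization $\bbC/\Lambda\to\bbC^2$ has irreducible image). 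The one technical point underlying both directions, and the place I expect to be most delicate, is precisely the evenness of any order-$2$ elliptic function about its symmetry point; without it, the two geometric types of holomorphic involutions could not be unified into the single translational form $g(t)=f(t+\delta)$ demanded by the statement.
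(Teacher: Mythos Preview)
The paper does not prove this theorem; it is quoted as a classical result going back to Euler (with a reference to Baxter), and the text that follows it only explains how to extract the Weierstrass invariants once the result is assumed. So there is no proof in the paper to compare against, and your argument has to stand on its own.

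Your overall architecture is sound and is essentially the standard one: Riemann--Hurwitz for the genus, uniformization, and then the involution argument to identify $g$ with a shift of $f$. The handling of the two types of holomorphic involutions, and in particular your reduction of the $t\mapsto c-t$ case via the evenness of any order-$2$ elliptic function about its Weierstrass point, is correct and is indeed the crux.

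There is, however, a genuine gap in your first paragraph. The assertion that ``irreducibility of $P$ forces these four branch points to be distinct'' is false. Take $P(x,\wx)=x^2\wx^2+x^2+\wx^2$: this is symmetric, biquadratic, and irreducible over $\bbC$ (a short case check rules out factors of every possible bidegree), yet the discriminant $-4x^2(x^2+1)$ has a double zero at $x=0$, the curve has an ordinary node at $(0,0)$, and the geometric genus is $0$. What your argument actually needs is smoothness of the $(2,2)$ curve in $\bbP^1\times\bbP^1$ (equivalently, simplicity of the four roots of $\Delta$ on $\bbP^1$); irreducibility alone does not deliver this. This is arguably an imprecision already present in the theorem as stated, but your proof should flag the extra hypothesis rather than claim it follows.

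A smaller issue in the converse: for $\delta\equiv 0\pmod\Lambda$ the image is the diagonal and the minimal relation is linear, while for $\delta$ a half-period both $f(t)$ and $f(t+\delta)$ factor through the \emph{same} quotient $\bbC/\Lambda\to\bbP^1$ (because $f(t-\delta)=f(t+\delta)$), so $\wx$ is a M\"obius function of $x$ and the irreducible relation has bidegree $(1,1)$. Thus ``arbitrary $\delta$'' in the statement should be read as ``generic $\delta$''; again the looseness is already in the theorem, but it is worth saying so.
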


Concrete computations can be performed as follows. To determine invariants of the elliptic curve described by a symmetric biquadratic relation
\begin{equation}\label{eq: biquad}
P(x,\wx) = \alpha_0 x^2 \widetilde{x}^2 + \alpha_1x \wx (x+\wx)
+\alpha_2(x^2+\widetilde{x}^2) + \alpha_3 x\widetilde{x}
+\alpha_4(x+\widetilde{x}) + \alpha_5=0,
\end{equation}
consider the Hamiltonian system of differential equations
$$
\dot x = {\displaystyle{\frac{\partial P(x,\wx)}{\partial \wx}}}, \qquad
\dot \wx = -{\displaystyle{\frac{\partial P(x,\wx)}{\partial x}}}.
$$
Using the relation $P(x,\wx)=0$, we can eliminate either $x$ or $\wx$ to get
\beq \label{eq:ddeode}
{\dot x}^2 = P(x), \qquad {\dot \wx}^2 = P(\wx),
\eeq
where
\bea
P(x) &=& (\alpha_1^2-4\alpha_0\alpha_2)x^4 + (2\alpha_1\alpha_3 -4\alpha_0 \alpha_4 -4\alpha_1\alpha_2)x^3 \nonumber  \\
& & +\, (\alpha_3^2-4 \alpha_0 \alpha_5 - 4\alpha_2^2-2\alpha_1\alpha_4)x^2 + (2\alpha_3 \alpha_4 - 4\alpha_1\alpha_5-4\alpha_2\alpha_4) x + \alpha_4^2 - 4\alpha_2 \alpha_5. \nonumber
\eea
Now, Theorem \ref{thm:Halphen} can be applied to differential equations (\ref{eq:ddeode}) in order to determine $g_2$ and $g_3$ in terms of $\alpha_i$.

To illustrate how a biquadratic relation of the type (\ref{eq: biquad}) can be derived for an integrable map, we briefly remind the relevant results for the so called Hirota-Kimura discretization of the Euler top \cite{HK, PS, PPS2}. The HK discretization of the Euler top (\ref{eq: ET x}) is:
\begin{equation}\label{eq: dET x}
\renewcommand{\arraystretch}{1.3}
\left\{\begin{array}{l}
\widetilde{x}_1-x_1=\epsilon c_1(\widetilde{x}_2x_3+x_2\widetilde{x}_3),\\
\widetilde{x}_2-x_2=\epsilon c_2(\widetilde{x}_3x_1+x_3\widetilde{x}_1),\\
\widetilde{x}_3-x_3=\epsilon c_3(\widetilde{x}_1x_2+x_1\widetilde{x}_2).
\end{array}\right.
\end{equation}
The map $f:x\mapsto\widetilde{x}\ $ is obtained by solving (\ref{eq: dET x}) for $\widetilde{x}$, and is a birational map of degree 6. As can be easily verified, this map possesses the following integrals of motion which are deformations of (\ref{eq: ET H}):
\begin{equation}\label{eq: dET ints}
    H_i(\epsilon)=\frac{c_j x_k^2-c_k x_j^2}{1-\epsilon^2 c_j c_k x_i^2}.
\end{equation}
Only two of them are independent, since
$$
c_1 H_1(\epsilon)+ c_2 H_2(\epsilon) + c_3 H_3(\epsilon)+
\epsilon^4 c_1 c_2 c_3 H_1(\epsilon) H_2(\epsilon) H_3(\epsilon)=0.
$$
To derive biquadratic relations for the components of the discrete time Euler top, one starts by proving the following relations:
\beq
\label{eq: dET W}
\left\{\begin{array}{l}
 \wx_2x_3-x_2\wx_3=\epsilon H_1(\epsilon)(\wx_1+x_1),\vspace{.1truecm}\\
 \wx_3x_1-x_3\wx_1=\epsilon H_2(\epsilon)(\wx_3+x_3),\vspace{.1truecm}\\
 \wx_1x_2-x_1\wx_2=\epsilon H_3(\epsilon)(\wx_3+x_3).
 \end{array} \right.
\eeq
From eqs. (\ref{eq: dET x}) and (\ref{eq: dET W}) there follows:
\[
(\wx_i-x_i)^2/(\epsilon c_i)^2+(\epsilon H_i(\epsilon))^2(\wx_i+x_i)^2=2(\wx_j^2x_k^2+x_j^2\wx_k^2).
\]
It remains to express $x_j^2$ and $x_k^2$ through $x_i^2$ and integrals $H_j(\epsilon)$, $H_k(\epsilon)$, which results in
\begin{equation}\nonumber
P_i(x_i,\wx_i) = \alpha_0^{(i)} x_i^2 \wx_i^2 + \alpha_2^{(i)}(x_i^2+\wx_i^2) + \alpha_3^{(i)}x_i\wx_i
+ \alpha_5^{(i)}=0,
\end{equation}
where the coefficients are expressed through the integrals of motion as follows:
\bea
& \alpha_0^{(i)}=-4\epsilon^2 c_j c_k, \quad
\alpha_2^{(i)}=(1+\epsilon^2 c_j H_j(\epsilon))(1-\epsilon^2 c_k H_k(\epsilon)), & \nonumber\\
& \alpha_3^{(i)}=-2(1-\epsilon^2 c_j H_j(\epsilon))(1+\epsilon^2 c_k H_k(\epsilon)), \quad
\alpha_5^{(i)}=4\epsilon^2 H_j(\epsilon) H_k(\epsilon). & \nonumber
\eea

Our goal in the present paper is to propose a toolbox which allows to compute the elliptic functions expressions for solutions of integrable maps going beyond the relatively simple situation covered by Theorem \ref{thm: biquad}. Our approach is based on some classical results about elliptic functions combined with rigorous symbolic computations in the spirit of experimental mathematics.

\section{Outline of the method}
\label{exper}

The theoretical basis of our method is the well known classical statement that any two elliptic functions with the same periods satisfy an algebraic relation. Specifically, we have the following claim \cite{WW}, which generalizes Theorem \ref{thm: biquad}.

\begin{theorem}\label{thm: WW}
Let  $X$ and $Y$ be two elliptic functions with the same periods, of orders $n$ and $m$, respectively. Then there exists an algebraic relation of the form
$
P(X,Y) = 0,
$
with an irreducible bivariate polynomial $P(X,Y)$ satisfying
\[
\deg_X P \leq m, \qquad \deg_Y P \leq n, \qquad \deg P \leq n+m.
\]
The coefficients of $P$ are unique up to multiplication with a scalar.
\end{theorem}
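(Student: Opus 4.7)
The plan is to translate the problem into algebraic geometry. I would regard $X$ and $Y$ as meromorphic functions on the elliptic curve $E = \bbC/\Lambda$, that is, as nonconstant morphisms $X: E \to \bbP^1$ and $Y: E \to \bbP^1$ of degrees $n$ and $m$ respectively, and then assemble them into $\phi = (X, Y) : E \to \bbP^1 \times \bbP^1$. The key geometric input is that, because $E$ is complete, irreducible, and one-dimensional and $\phi$ is nonconstant, the image $C = \phi(E)$ must be an irreducible closed one-dimensional subvariety of $\bbP^1 \times \bbP^1$; equivalently, its affine part is cut out by a single irreducible polynomial $P(X, Y)$, unique up to a nonzero scalar. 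The relation $P(X,Y) \equiv 0$ on $E$ is then tautological, since every point $(X(z), Y(z))$ lies on $C$ by construction.

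Next I would extract the bidegree bounds from the associated tower of function fields. Letting $d = \deg \phi$ denote the degree of the surjection $\phi : E \to C$, the inclusions $\bbC(X) \subseteq \bbC(C) \subseteq \bbC(E)$ (induced by $X = \mathrm{pr}_1 \circ \phi$) yield
\[
n = [\bbC(E) : \bbC(X)] = [\bbC(E) : \bbC(C)] \cdot [\bbC(C) : \bbC(X)] = d \cdot [\bbC(C) : \bbC(X)].
\]
Since $[\bbC(C) : \bbC(X)]$ is precisely the $Y$-degree of the defining polynomial of $C$, this forces $\deg_Y P = n/d \le n$. Symmetrically $\deg_X P = m/d \le m$, whence the total-degree estimate $\deg P \le \deg_X P + \deg_Y P \le m + n$ comes for free.

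For uniqueness, I would observe that any irreducible $Q(X,Y)$ satisfying $Q(X,Y) \equiv 0$ on $E$ must vanish on the image $C$; because $C$ is an irreducible hypersurface and $Q$ is irreducible, $Q$ and $P$ generate the same principal ideal in $\bbC[X,Y]$ and therefore differ only by a nonzero scalar. The single step requiring genuine verification, as opposed to formal manipulation, is the identification of $C = \phi(E)$ as a closed algebraic curve; this is the point at which one must invoke the completeness (projectivity) of $E$, which ensures that the image of $\phi$ is closed in $\bbP^1 \times \bbP^1$, so that the combination of irreducibility and one-dimensionality of $E$ then forces $C$ to be an irreducible algebraic curve. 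I expect this to be the only conceptually nontrivial input; once $C$ has been exhibited as an algebraic curve, the bidegree bounds and uniqueness are formal corollaries.
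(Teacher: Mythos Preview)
Your argument is correct. You package the classical statement in the language of algebraic geometry: $\phi=(X,Y):E\to\bbP^1\times\bbP^1$ has closed irreducible one-dimensional image because $E$ is complete and irreducible, so the image is the zero locus of an irreducible bihomogeneous polynomial; the bidegree is then read off from the multiplicativity of degrees in the tower $\bbC(X)\subset\bbC(C)\subset\bbC(E)$, giving $\deg_Y P=n/d$ and $\deg_X P=m/d$ for $d=\deg\phi$. One small point worth making explicit is that $C$ is not contained in a fiber of either projection (since $X$ and $Y$ are both nonconstant), which guarantees that $P$ genuinely has positive degree in each variable and that the affine part of $C$ is dense.

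The paper does not actually supply a proof of this theorem; it is quoted with a reference to Whittaker--Watson \cite{WW}. The classical argument there is analytic rather than scheme-theoretic: one fixes a value $x$ of $X$, lists the $n$ preimages $z_1,\dots,z_n$ in the period parallelogram, and forms $\prod_{k=1}^n\bigl(w-Y(z_k)\bigr)$; the coefficients are symmetric in the $z_k$, hence single-valued meromorphic functions of $x$ on the Riemann sphere, i.e.\ rational functions of $x$, and a pole count bounds their degrees by $m$. Your approach and the classical one are really the same theorem viewed through different lenses: the product construction is exactly the norm of $w-Y$ from $\bbC(E)$ down to $\bbC(X)$, and your factorization $n=d\cdot\deg_Y P$ is the field-theoretic shadow of the fact that this norm is a $d$-th power of the minimal polynomial of $Y$ over $\bbC(X)$. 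What your formulation buys is a cleaner explanation of \emph{when} the degree bounds are strict (namely, exactly when $d>1$), which is relevant to the paper's subsequent Theorem~\ref{thm: sharp}; what the classical argument buys is that it avoids invoking properness and works directly with the function-theoretic definition of order.
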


As a corollary, any elliptic function $X$ satisfies an algebraic differential equation, i.e., a polynomial relation between $X$ and its derivative $X'$. Similarly, if $X=X(t)$ is an elliptic function, then $\tilde{X}(t) = X(t+\delta)$ for an arbitrary $\ep\in \bbC$ is also an elliptic function with the same periods. Hence, $X$ and $\tilde{X}$ are connected by  a polynomial relation, called {\it{addition theorem}}. Theorems \ref{thm:Halphen}, \ref{thm: biquad} illustrate these conclusions. A famous concrete example of an addition theorem is given by the Weierstrass $\wp$-function. If we take the following well-known formula for $\wp$-functions,
$$
\wp(u+v) + \wp(u) + \wp(v) = \frac{1}{4} \left( \frac{\wp'(u) - \wp'(v)}{\wp(u) - \wp(v)} \right)^2,
$$
and eliminate all derivatives via $\wp'^2 = 4 \wp^3 - g_2 \wp - g_3$, we obtain
\beq \nonumber
\left(X Y+ YZ+ZX+\frac{g_2}{4}\right)^2 - 4(XYZ-g_3)(X+Y+Z) = 0,
\eeq
where
$
X = \wp(x), \, Y=\wp(y), \, Z=\wp(z),
$
such that $x+y+z=0$. Setting $z=\delta$, we obtain the symmetric biquadratic relation between $X=\wp(x)$ and $Y=\wp(x+\delta)$.
\medskip

Of course, the above relation may be of lower degree, as, for instance, for $X=\wp$ and $Y=\wp'$ (where $n=2$, $m=3$, but the total degree of $P(X,Y)=Y^2-4X^3+g_2X+g_3$ is only 3), or for an even more trivial example of $X=\wp$ and $Y=\wp^2$ (where $n=2$, $m=4$, and $P(X,Y)=Y-X^2$ is of bidegree (2,1)). This is a common phenomenon related to shared poles of $X$ and $Y$. If one imposes additional conditions on the poles of $X$ and $Y$, one may obtain sharper degree bounds for $P$. In particular, the following statement holds true.

\begin{theorem}\label{thm: sharp}
Let $X$ and $Y$ be two elliptic functions with the same periods, each one of order $n$ and having $n$ simple poles. If $X$ and $Y$ have $k$ poles in common, then $\deg P \leq 2n-k$.
\end{theorem}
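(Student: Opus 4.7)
The plan is to realize the total degree of $P$ as the intersection number of the image curve $C=\overline{\psi(E)}\subset\bbP^2$ of the parametrization $\psi:t\mapsto(1:X(t):Y(t))$ with a generic line, and to bound this intersection number via the classical balance of zeros and poles for elliptic functions. First, let $p_1,\dots,p_n$ denote the simple poles of $X$ on the complex torus $E=\bbC/\Lambda$ and $q_1,\dots,q_n$ those of $Y$, labeled so that $p_l=q_l$ for $l=1,\dots,k$. Introduce the effective divisor
\[
D=\sum_{l=1}^n p_l+\sum_{l=k+1}^n q_l,
\]
supported on exactly $2n-k$ distinct points and of degree $2n-k$; by construction every affine-linear combination $f=a+bX+cY$ lies in $L(D)$.

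Next I would identify $\deg P$ geometrically. By Theorem \ref{thm: WW}, $P$ is the irreducible polynomial cutting out $C$, so $\deg P$ equals the intersection number of $C$ with a generic line $\ell=\{a+bX+cY=0\}$. Extending $\psi$ to a morphism $E\to\bbP^2$ and pulling back $\ell$, one obtains, for $(a:b:c)$ avoiding finitely many hyperplane conditions (those guaranteeing that $\ell$ misses the finitely many points $\psi(p_l),\psi(q_l)$ at infinity),
\[
\deg\psi\cdot\deg P \;=\; \deg(\psi^*\ell) \;=\; \deg(f)_0 \;=\; \deg(f)_\infty,
\]
where $\deg\psi\geq 1$ is the degree of $\psi$ onto its image, and the last equality is the elementary principle that every non-constant elliptic function has as many zeros as poles.

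The only remaining step, and the one I expect to be the main obstacle, is to verify that $\deg(f)_\infty=2n-k$ for generic $(a,b,c)$. A priori $\deg(f)_\infty\leq\deg D=2n-k$ since $f\in L(D)$; equality fails precisely when the residue of $f$ vanishes at some point of $\operatorname{supp}(D)$. At each of the $2(n-k)$ non-shared poles, $f$ inherits a simple pole of $X$ or $Y$ with residue a nonzero multiple of $b$ or $c$, while at each shared pole $p_l$ with $l\leq k$ the residue equals $b\operatorname{Res}_{p_l}(X)+c\operatorname{Res}_{p_l}(Y)$, which is nonzero outside $k$ hyperplanes in the $(a:b:c)$-plane. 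Hence for generic $(a,b,c)$ all $2n-k$ poles survive, so that $\deg\psi\cdot\deg P=2n-k$, and in particular $\deg P\leq 2n-k$. The only non-routine ingredient is really this residue-genericity check at the shared poles; everything else is classical elliptic function theory combined with the degree formula for a pullback of a line under a morphism to $\bbP^2$.
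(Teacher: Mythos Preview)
Your argument is correct. The paper, however, states this theorem without proof (it is presented as a sharpening of Theorem~\ref{thm: WW} and immediately followed by the description of the method), so there is nothing to compare against on the paper's side.

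A small remark on your exposition: the residue-genericity check at the shared poles, which you flag as the only non-routine step, can be avoided entirely. Instead of pulling back a generic affine line $\{a+bX+cY=0\}$, pull back the line at infinity $\ell_\infty=\{Z_0=0\}$. Near a pole $p$ of $X$ (and possibly of $Y$), the extended morphism is $[(t-p):(t-p)X(t):(t-p)Y(t)]$, so $Z_0$ pulls back to $(t-p)$, vanishing to order exactly $1$; at non-poles $Z_0=1$ does not vanish. Hence $\psi^*\ell_\infty=D$ on the nose, giving $\deg\psi\cdot\deg P=\deg D=2n-k$ with no case analysis. Either way, your identification of $\deg P$ with the degree of the image curve and the computation via the linear system $\langle 1,X,Y\rangle\subset L(D)$ is the right idea, and the conclusion $\deg P\le 2n-k$ follows since $\deg\psi\ge 1$.
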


Now we are in a position to sketch the essentials of our method. Assume that we are given a birational map $f$ on the phase space $\bbR^n$ with coordinates $x_i$,
$$
x \mapsto \tilde x = f(x).
$$
We want to test whether it is  solvable in terms of elliptic functions and to provide an Ansatz for explicit solutions.
If orbits of $f$ can be parametrized in terms of elliptic functions, then the pairs $(x_i,\wx_i)=(x_i(t),x_i(t+\delta))$  considered as functions of the discrete time $t \in \delta\bbZ$ will satisfy a polynomial relation of the type
\begin{equation}\label{eq: alg x wx}
P(x_i,\wx_i) = 0.
\end{equation}
The degree of $P$ in $x_i$ and $\wx_i$ will be the same and equal to the order of the elliptic function $x_i$.

Suppose that all components $x_i$ are elliptic functions. Then one could check whether they have the same periods. For this aim, one could compute the invariants $g_2$, $g_3$ of all the curves given by the above relations. The computation of the invariants $g_2$, $g_3$ may be accomplished using algorithms by M.~van Hoeij  \cite{VH}, implemented in Maple {\it algcurves} package.

Assume that all $x_i$ are given by elliptic functions with the same periods and (for the sake of simplicity) of the same order $n$. Then one could gather additional information about the distribution of their poles and zeros. For this aim, one looks for relations of the form
\begin{equation}\label{eq. alg xi xj}
Q(x_i,x_j) = 0.
\end{equation}
The degree of $Q$ now tells us whether $x_i$ and $x_j$ have a number of common poles. In particular, if $\deg Q = 2n-k$, then $x_i$ and $x_j$ are likely to have $k$ common poles. The same relation, considered as a polynomial relation between $1/x_i$ and $1/x_j$, will deliver information about possible common zeros of $x_i$ and $x_j$. Similarly, one looks for the number of coincidences between the poles of $x_i$ and zeros of $x_j$ (say). Eventually, we obtain enough information about the solutions in order to fully characterize poles and zeros of $x_i$. On the base of this information, one finally arrives at Ans\"atze, which can be rigorously verified using the classical means of mathematical analysis.
\medskip

In applications, formulas for the maps $f$ are often given by implicit equations of motion, like eq. (\ref{eq: dET x}), the explicit formulas being much more complicated and difficult to handle. Also, explicit formulas for integrals are often very messy (compare \cite{PPS1, PPS2}). Therefore, one cannot expect to be able to find the above mentioned relations (\ref{eq: alg x wx}), (\ref{eq. alg xi xj}) by hand, and one has to turn to the software for symbolic manipulations. We discuss some aspects of symbolic computations in the next section.

\section{Hirota-Kimura bases}
\label{sect: HK}

For our method, the notion of a Hirota-Kimura basis, introduced and studied in some detail in \cite{PPS1}, is relevant. We recall here the main facts, following that reference.

For a given birational map $f: \bbR^n \to \bbR^n$, a set of functions $\Phi=(\varphi_1,\ldots,\varphi_m)$, linearly independent over $\bbR$, is called a {\it Hirota-Kimura basis}
(HK-basis), if for every $x\in\bbR^n$ there exists a non-vanishing vector
$c=(c_1,\ldots,c_m)^{\rm T}$ such that 
\[
c_1\varphi_1(f^i(x))+\ldots+c_m\varphi_m(f^i(x))=0\quad{\rm for\;\; all} \quad i\in\bbZ.
\]
For a given $x\in\bbR^n$, the vector space consisting of all $c\in\bbR^m$ with
this property, say $K_\Phi(x)$, is called the {\it null-space} of the basis $\Phi$ at the point $x$.

The notion of HK-bases is closely related to the notion of integrals, even if they cannot be
immediately translated into one another. For instance, if $K_\Phi(x)={\rm span}(c_1,\ldots,c_M)^{\rm T}$, so that $\dim K_\Phi(x)=1$, then the quotients $c_i:c_j$ are integrals of motion (or, in other words, $[c_1:\ldots :c_m]\in\bbC\bbP^{m-1}$ is an integral of motion).

For a given set of functions $\Phi=(\varphi_1,\ldots,\varphi_m)$
and for any interval $[j,k]\subset\bbZ$ we denote
\begin{equation}\label{eq: balance}
X_{[j,k]}(x) = \left(
\begin{array}{ccc}
\varphi_1(f^j(x))  & ..   & \varphi_m(f^j(x))   \\
\varphi_1(f^{j+1}(x))  & ..   & \varphi_m(f^{j+1}(x))   \\
...  &   & ...   \\
\varphi_1(f^{k}(x))  & .. & \varphi_m(f^{k}(x))
\end{array}
\right).
\end{equation}
In particular, $X_{(-\infty,\infty)}(x)$ will denote the double
infinite matrix of the type (\ref{eq: balance}). Obviously,
\[
\ker X_{(-\infty,\infty)}(x)=K_\Phi(x).
\]
Thus, $\Phi$ is a Hirota-Kimura basis for $f$ if and only if $\dim \ker
X_{(-\infty,\infty)}(x)\ge 1$. Our algorithm for detecting this
situation is based on the following result formulated in \cite{PPS1}.
\begin{theorem}\label{Lem: dim K}
Let
\begin{equation}\label{eq: fundamental finite first}
\dim\ker X_{[0,s-1]}(x)= m-s \;\; {\rm for}\;\; 1 \leq s \leq m-1,
\end{equation}
and
\begin{equation}\label{eq: fundamental finite last}
\dim\ker X_{[0,m-1]}(x)=1
\end{equation}
hold for all $x\in\bbR^n$. Then for any $x\in\bbR^n$ there holds:
\begin{equation}\nonumber
K_\Phi(x)=\ker X_{[0,m-1]}(x),
\end{equation}
and, in particular,
\begin{equation}\nonumber
\dim K_\Phi(x)=1.
\end{equation}
\end{theorem}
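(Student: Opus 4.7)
The inclusion $K_\Phi(x) = \ker X_{(-\infty,\infty)}(x) \subseteq \ker X_{[0,m-1]}(x)$ is immediate, since $X_{[0,m-1]}(x)$ is just a row sub-block of $X_{(-\infty,\infty)}(x)$; together with (\ref{eq: fundamental finite last}) this already gives $\dim K_\Phi(x) \leq 1$. The substance of the theorem is the reverse inclusion: every $c \in \ker X_{[0,m-1]}(x)$ must annihilate $\varphi(f^i(x))$ for all $i \in \bbZ$, i.e., must extend past both ends of the interval $[0,m-1]$.

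My plan is to prove by induction on $k \geq 0$ that, for every $x \in \bbR^n$,
\[
\ker X_{[0,\,m-1+k]}(x) \;=\; \ker X_{[0,m-1]}(x).
\]
The inductive step compares the two overlapping $(m+k)\times m$ matrices $X_{[0,\,m-1+k]}(x)$ and $X_{[1,\,m+k]}(x)=X_{[0,\,m-1+k]}(f(x))$; by the inductive hypothesis applied at $x$ and at $f(x)$, each has a $1$-dimensional kernel, hence rank $m-1$. The key observation is that their shared row-block $X_{[1,\,m-1+k]}(x)=X_{[0,\,m-2+k]}(f(x))$ also has rank exactly $m-1$: it contains the sub-block $X_{[0,m-2]}(f(x))$, which has rank $m-1$ by hypothesis (\ref{eq: fundamental finite first}) at $f(x)$ with $s=m-1$, and it is itself a sub-block of the rank-$(m-1)$ matrix $X_{[0,\,m-1+k]}(f(x))$. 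Sandwiching gives the shared block rank $m-1$. Both larger matrices then have $(m-1)$-dimensional row spaces that contain this same $(m-1)$-dimensional subspace, so all three row spaces coincide; consequently so do their kernels, which forces any $c \in \ker X_{[0,\,m-1+k]}(x)$ to annihilate the additional row indexed by $m+k$.

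The extension to negative indices is symmetric. Since $f$ is birational, $f^{-1}$ is defined on a dense open subset, and the hypotheses transfer to any point in its domain. I then run the analogous induction on $k \geq 0$ to establish $\ker X_{[-k,\,m-1]}(x) = \ker X_{[0,m-1]}(x)$, comparing $X_{[-k,\,m-1]}(x)$ with $X_{[-k-1,\,m-2]}(x) = X_{[-k,\,m-1]}(f^{-1}(x))$; the shared block $X_{[-k,\,m-2]}(x)$ again has rank $m-1$ by the same sandwich argument, this time leaning on $\dim\ker X_{[0,m-2]}(x)=1$ (the $s=m-1$ case of (\ref{eq: fundamental finite first})). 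Intersecting the two families of equalities yields $K_\Phi(x)=\ker X_{(-\infty,\infty)}(x)=\ker X_{[0,m-1]}(x)$, of dimension $1$.

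The step I expect to be the main technical nuisance is the rank bookkeeping of the shared row blocks: at each stage one must simultaneously bound the rank from below (by exhibiting a full-rank sub-block coming from one of the hypotheses (\ref{eq: fundamental finite first})--(\ref{eq: fundamental finite last})) and from above (by inclusion in a matrix whose rank has already been pinned to $m-1$ by the inductive hypothesis at a shifted point). A secondary subtlety is the application of the hypotheses at $f^{-1}(x)$ in the backward induction, which relies on birationality of $f$ to guarantee that $f^{-1}(x)$ lies where the premises remain valid.
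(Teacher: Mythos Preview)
The paper does not actually prove this theorem; it is quoted from \cite{PPS1} and stated without proof, so there is no in-paper argument to compare against.

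Your proof is correct and is essentially the standard one for this type of statement. The decisive observation---that the hypotheses hold at \emph{every} point, hence in particular at $f(x)$ and $f^{-1}(x)$, so the window $[0,m-1]$ can be slid one step in either direction---is exactly the mechanism behind the result, and the rank sandwich on the overlapping row block is the right tool to execute it. One small point worth making explicit in the final write-up: to pass from the two one-sided inductions to the full conclusion you use
\[
\ker X_{[-k,\,m-1+k']}(x)=\ker X_{[-k,\,m-1]}(x)\cap\ker X_{[0,\,m-1+k']}(x),
\]
which is immediate since the union of the two row-index sets is $\{-k,\dots,m-1+k'\}$. Your caveat about applying the hypotheses at $f^{-1}(x)$ is legitimate but, as in the paper itself, this is handled by tacitly restricting to the Zariski-open set on which all relevant iterates of the birational map are defined.
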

Let us stress that, while in general (i.e., for general maps $f$, general function sets $\Phi$ and for generic  $x\in\bbR^n$) relations (\ref{eq: fundamental finite first}) are satisfied, one will generally find that the $m\times m$ matrix $X_{[0,m-1]}(x)$ is non-degenerate, so that $\dim\ker X_{[0,m-1]}(x)=\dim K_\Phi(x)=0$. Finding (a candidate for) a HK-basis $\Phi$ with $\dim\ker X_{[0,m-1]}(x)=\dim K_\Phi(x)=1$ is a highly non-trivial task.

The following numerical algorithm can be used for checking that a given system of functions $\Phi=(\varphi_1,\ldots,\varphi_m)$ is a HK-basis for a given birational map $f$.
\begin{itemize}
\item[(N)] For several randomly chosen initial points $x\in\bbQ^n$, compute $\ker X_{[0,m-1]}(x)$, using exact rational arithmetic. If for every $x$ condition (\ref{eq: fundamental finite last}) is satisfied, then, for all practical considerations, $\Phi$ is a HK-basis for $f$, with $\dim K_\Phi(x)=1$.
\end{itemize}

In what follows, we will call the results obtained in this way {\em Facts}. As such, they cannot be ascribed the status of mathematical theorems, but with an additional effort they can be given a rigorous proof (sometimes a human proof, and sometimes a computer assisted one). We have provided details for such proofs in \cite{PPS1} in the situation which was much more demanding than the ones described in the present paper. Therefore, we are pretty sure that all our {\em Facts} can be turned into mathematical theorems after a sufficient effort. However, this would not be strictly necessary for our purposes, because we use these {\em Facts} to formulate the Ans\"atze for solutions, which are then rigorously proved by different methods.
\smallskip

The relevance of the notion of HK-bases for our present goals is easy to understand. Indeed, existence of
an algebraic relation like (\ref{eq: alg x wx}) or (\ref{eq. alg xi xj}) for iterates of a birational map $f$
can be phrased as existence of a certain HK-basis. For instance, formula (\ref{eq: biquad}) for some component $x_i$ of the vector $x$ is equivalent to saying that
\[
\Phi=\Big(x_i^2\wx_i^2, \ x_i\wx_i(x_i+\wx_i), \ x_i^2+\wx_i^2, \ x_i\wx_i, \ x_i+\wx_i, \ 1\Big),
\]
is a HK basis for $f$ with $K_\Phi(x)$ spanned by $(\alpha_0,\ldots,\alpha_5)$. By the way, formula (\ref{eq: Halphen polynom}) can be interpreted as the claim that $$\Phi=(\dot{x}^2, x^4, x^3, x^2, x, 1)$$ is a HK-basis with $\dim K_\Phi(x)=1$ for the phase flow of the corresponding differential equation. Thus, actually  HK-bases appeared in a disguised form in the continuous time theory long ago.

In the remaining part of the paper we will apply the method described in the last two sections to finding explicit solutions, given in terms of elliptic functions, of two birational maps discretizing two
classical integrable systems, namely the periodic Volterra chains with $N=3$ and $N=4$.

\section{Elliptic solutions of the infinite Volterra chain}

The infinite Volterra chain (VC) is a well-known completely integrable system \cite{M, KvM} governed by the
equations of motion
\begin{equation}\label{eq: VL}
\dot{x}_n=x_n(x_{n+1}-x_{n-1}),\qquad n\in\mathbb Z.
\end{equation}
This system admits two different families of solutions which are expressed in terms of elliptic functions.

The first family of elliptic solutions \cite{Ves} is given by:
\begin{eqnarray}
x_n(t) & = & \zeta\big(t+nv\big)-\zeta\big(t+(n-1)v\big)+\zeta(v)-\zeta(2v)
\label{eq: VL sol1 1}\\
 & = & \frac{\sigma\big(t+(n+1)v\big)\sigma\big(t+(n-2)v\big)}
 {\sigma\big(t+nv\big)\sigma\big(t+(n-1)v\big)\sigma(2v)},\label{eq: VL sol1 2}
\end{eqnarray}
where $v$ is an arbitrary complex number.
The equivalence of the two representations (\ref{eq: VL sol1 1}) and (\ref{eq: VL sol1 2}) is either easily checked by looking at poles and zeroes of the both elliptic functions, or just by using formula
\begin{equation}\label{1234}
\zeta(a)+\zeta(b)+\zeta(c)-\zeta(a+b+c)=\frac{\sigma(a+b)\sigma(b+c)\sigma(c+a)}
{\sigma(a)\sigma(b)\sigma(c)\sigma(a+b+c)}.
\end{equation}
The check that (\ref{eq: VL sol1 1}) and (\ref{eq: VL sol1 2}) are indeed a solution of VC is now elementary. Take the logarithmic derivative of (\ref{eq: VL sol1 2}) and then use (\ref{eq: VL sol1 1}) with shifted indices:
\begin{eqnarray*}
\frac{\dot{x}_n}{x_n} & = & \zeta\big(t+(n+1)v\big)+\zeta\big(t+(n-2)v\big)-\zeta\big(t+nv\big)-\zeta\big(t+(n-1)v\big)\\
 & = & x_{n+1}-x_{n-1}.
\end{eqnarray*}

The second family of elliptic solutions \cite{Ver, Kit} is given by:
\begin{eqnarray}
\!\!\!\!\!\!\!\!\!\!x_{2n-1}(t) & = & \zeta\big(t+nv_1+(n-1)v_2\big)-\zeta\big(t+(n-1)(v_1+v_2)\big)+\zeta(v_2)-\zeta(v_1+v_2)\quad
\label{eq: VL sol2 1}
\\
 & = & \frac{\sigma\big(t+n(v_1+v_2)\big)\sigma\big(t+(n-1)v_1+(n-2)v_2\big)\sigma(v_1)}
 {\sigma\big(t+nv_1+(n-1)v_2\big)\sigma\big(t+(n-1)(v_1+v_2)\big)\sigma(v_2)\sigma(v_1+v_2)},
 \label{eq: VL sol2 2}
 \\ \nonumber\\
 x_{2n}(t) & = & \zeta\big(t+n(v_1+v_2)\big)-\zeta\big(t+nv_1+(n-1)v_2\big)+\zeta(v_1)-\zeta(v_1+v_2)
 \label{eq: VL sol2 3}
\\
 & = & \frac{\sigma\big(t+(n+1)v_1+nv_2\big)\sigma\big(t+(n-1)(v_1+v_2)\big)\sigma(v_2)}
 {\sigma\big(t+n(v_1+v_2)\big)\sigma\big(t+nv_1+(n-1)v_2\big)\sigma(v_1)\sigma(v_1+v_2)},
 \label{eq: VL sol2 4}
\end{eqnarray}
and reduces to the first one if $v_1=v_2=v$. Again, the verification of these solutions is straightforward:
\begin{eqnarray*}
\frac{\dot{x}_{2n-1}}{x_{2n-1}} & = & \zeta\big(t+n(v_1+v_2)\big)+\zeta\big(t+(n-1)v_1+(n-2)v_2\big)\nonumber\\
 & & -\zeta\big(t+nv_1+(n-1)v_2\big)-\zeta\big(t+(n-1)(v_1+v_2)\big)\\
 & = & x_{2n}-x_{2n-2},\\
 \frac{\dot{x}_{2n}}{x_{2n}} & = & \zeta\big(t+(n+1)v_1+nv_2)\big)+\zeta\big(t+(n-1)(v_1+v_2)\big)\nonumber\\
 & & -\zeta\big(t+n(v_1+v_2)\big)-\zeta\big(t+nv_1+(n-1)v_2\big)\\
 & = & x_{2n+1}-x_{2n-1}.
\end{eqnarray*}

The first family of solutions admits an $N$-periodic reduction $(n\in \mathbb Z/N\mathbb Z)$, if $Nv\equiv 0$ modulo the period lattice. The second family admits a $(2N)$-periodic reduction, if $N(v_1+v_2)\equiv 0$ modulo the period lattice. We will show that for the periodic VC with $N=3$ and $N=4$, these elliptic solutions are indeed general solutions.

\section{General solutions of the periodic Volterra chain with $N=3$}

The periodic reduction of the Volterra chain with $N=3$ (VC$_3$) reads:
\begin{equation}
\label{eq: VL3}
\left\{ \begin{array}{l}
\dot{x}_1= x_1(x_2-x_3),  \vspace{.1truecm} \\
\dot{x}_2= x_2(x_3-x_1), \vspace{.1truecm} \\
\dot{x}_3= x_3(x_1-x_2).
\end{array} \right.
\end{equation}
This integrable system possesses two functionally independent integrals of motion:
\begin{equation} \label{eq: VL3 H2}
H_1=x_1+x_2+x_3,\qquad H_2=x_1x_2x_3.
\end{equation}

We now prove that formulas (\ref{eq: VL sol1 1}) or (\ref{eq: VL sol1 2}) provide indeed the general
solution of VC$_3$.

\begin{theorem}
The general solution of VC$_3$ is given by formulas (\ref{eq: VL sol1 1}) or (\ref{eq: VL sol1 2}) with $v$ being a one third of a period, i.e., $3v\equiv 0$ modulo the period lattice.
Explicitly, in terms of $\zeta$-functions,
\begin{eqnarray}\label{eq: VL3 solution}
&&x_1  (t)= \zeta(t+v)-\zeta(t)+\zeta(v)-\zeta(2v),  \nonumber \\
&&x_2  (t)= \zeta(t+2v)-\zeta(t+v)+\zeta(v)-\zeta(2v), \nonumber \\
&&x_3  (t)= \zeta(t+3v)-\zeta(t+2v)+\zeta(v)-\zeta(2v), \nonumber
\end{eqnarray}
or, in terms of $\sigma$-functions,
\begin{eqnarray}
&&x_1  (t)=  \frac{\sigma(t-v)\sigma(t+2v)}
{\sigma(t)\sigma(t+v)\sigma(2v)},\label{eq: VL3 solution sigma1}\\
&&x_2 (t) = \frac{\sigma(t)\sigma(t+3v)}
{\sigma(t+v)\sigma(t+2v)\sigma(2v)},\label{eq: VL3 solution sigma2}\\
&&x_3  (t)= \frac{\sigma(t+v)\sigma(t+4v)}
{\sigma(t+2v)\sigma(t+3v)\sigma(2v)}.\label{eq: VL3 solution sigma3}
\end{eqnarray}
\end{theorem}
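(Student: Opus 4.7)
The plan is to prove the theorem in three stages: (i) direct verification that the Ansatz solves VC$_3$, (ii) reduction to Theorem \ref{thm:Halphen} to obtain an elliptic parametrization of every orbit, and (iii) matching of the Halphen form with the explicit $\zeta$-representation (\ref{eq: VL3 solution}). For (i), the logarithmic-derivative computation of the previous section already gives $\dot{x}_n/x_n=x_{n+1}-x_{n-1}$ for the infinite-chain Ansatz, so I only need to check the periodicity $x_{n+3}(t)=x_n(t)$. Writing (\ref{eq: VL sol1 1}) in telescoping form and invoking the quasi-periodicity $\zeta(u+\omega_j)=\zeta(u)+2\eta_j$, the shifts produced by $t\mapsto t+3v$ cancel exactly because $3v\in\Lambda$, yielding 3-periodicity.

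For (ii), I would first count parameters: the Ansatz depends on $(g_2,g_3,v,t_0)$ subject to the single constraint $3v\in\Lambda$, i.e., on three essential complex parameters, matching $\dim\bbC^3$. To upgrade this dimension match into genuine surjectivity, I use the integrals (\ref{eq: VL3 H2}) to eliminate $x_2,x_3$ --- they are the two roots of $z^2-(H_1-x_1)z+H_2/x_1=0$ --- and squaring $\dot{x}_1=x_1(x_2-x_3)$ produces
\[
\dot{x}_1^2 = x_1^2(H_1-x_1)^2-4H_2 x_1,
\]
a quartic in $x_1$ of the form (\ref{eq: Halphen polynom}). Theorem \ref{thm:Halphen} then represents $x_1(t)$ as a second-order elliptic function whose invariants $g_2,g_3$ and Halphen shift $v_H$ are explicit rational functions of $H_1,H_2$; in particular, the specialization of the formulas in that theorem gives $\wp(v_H)=H_1^2/12$ and $\wp'(v_H)=-H_2$.

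The main obstacle will be step (iii), reconciling the two elliptic expressions. Applying the addition identity $\zeta(u+v)-\zeta(u)-\zeta(v)=\tfrac12(\wp'(u)-\wp'(v))/(\wp(u)-\wp(v))$, I would rewrite (\ref{eq: VL sol1 1}) as $\tfrac12(\wp'(t)-\wp'(v))/(\wp(t)-\wp(v))+(2\zeta(v)-\zeta(2v))$, so that agreement with the Halphen normal form $-\alpha_1/\alpha_0+\zeta(u+v)-\zeta(u)-\zeta(v)$ reduces to the scalar relation $H_1/2=2\zeta(v)-\zeta(2v)$. This identity is forced by 3-torsion: writing $3v=m_1\omega_1+m_2\omega_2\in\Lambda$ gives $\zeta(2v)=\zeta(-v)+2m_1\eta_1+2m_2\eta_2=-\zeta(v)+2m_1\eta_1+2m_2\eta_2$, and a direct summation of $x_1+x_2+x_3$ from (\ref{eq: VL3 solution}) yields precisely this constant. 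Simultaneously, the 3-torsion condition on $v_H$ amounts to the division-polynomial identity $\psi_3(\wp(v_H))=0$, which must be verified against the explicit formulas for $g_2,g_3$ supplied by Theorem \ref{thm:Halphen}; this algebraic identity among $H_1,H_2,g_2,g_3$ is the essential analytic content. Once the matching is secured, uniqueness of solutions of the initial-value problem for (\ref{eq: VL3}) together with the three-parameter count promotes the Ansatz to the general solution.
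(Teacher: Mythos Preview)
Your plan follows essentially the same path as the paper: derive the quartic ODE $\dot{x}_i^2=x_i^2(x_i-H_1)^2-4H_2x_i$, apply Theorem~\ref{thm:Halphen}, and then verify that the resulting shift $v$ is a $3$-torsion point. Where you propose to check $3$-torsion via the division polynomial $\psi_3$, the paper instead computes $\wp(2v)$ from the duplication formula and shows $\wp(2v)=\wp(v)$ directly; the two routes are equivalent.

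Two places in your outline deserve care. First, your verification of the constant identity $H_1/2=2\zeta(v)-\zeta(2v)$ by ``summing $x_1+x_2+x_3$ from (\ref{eq: VL3 solution})'' is circular as written: that summation evaluates $H_1$ \emph{on the Ansatz}, whereas what must be checked is the identity for the Halphen data $(g_2,g_3,v_H)$ determined by the given $(H_1,H_2)$. The paper closes this directly: from the explicit $g_2$ one gets $\wp''(v)=6\wp^2(v)-\tfrac12 g_2=H_1H_2$, hence $H_1=-\wp''(v)/\wp'(v)=4\zeta(v)-2\zeta(2v)$, the last equality being the general $\zeta$-duplication formula (no torsion hypothesis required). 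Second, your final step --- lifting the match for $x_1$ alone to the full triple $(x_1,x_2,x_3)$ via IVP-uniqueness and a parameter count --- is under-argued: a dimension count by itself does not yield surjectivity, and matching $x_1(t)$ as a function determines $x_2,x_3$ only up to the swap $x_2\leftrightarrow x_3$. The paper takes a different route here: eliminating $x_k$ from the two integrals gives the degree-$3$ relation $x_ix_j(x_i+x_j)-H_1x_ix_j+H_2=0$, and Theorem~\ref{thm: sharp} then forces each pair $x_i,x_j$ to share exactly one pole, which pins down the relative shifts and delivers the $\sigma$-factorizations (\ref{eq: VL3 solution sigma1})--(\ref{eq: VL3 solution sigma3}) directly.
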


\begin{proof}
Eliminating $x_j,x_k$ from equations of motion (\ref{eq: VL3}) for $x_i$ with the help of integrals of motion (\ref{eq: VL3 H2}), one arrives at
\begin{equation}\label{eq: VL3 eq xi}
\dot{x}_i^2=x_i^2(x_i-H_1)^2-4H_2x_i.
\end{equation}
We can now apply Theorem \ref{thm:Halphen} to the differential equation (\ref{eq: VL3 eq xi}), which shows that the general solution is given by elliptic functions. The quartic polynomial in (\ref{eq: VL3 eq xi}) has the coefficients
$$
\alpha_0=1, \quad \alpha_1= -\frac{1}{2}H_1, \quad \alpha_2=\frac{1}{6}H_1^2, \quad \alpha_3=-H_2, \quad \alpha_4=0.
$$
Thus, we find that each of the coordinates $x_i$ is a time shift of
$$
x(t)=\zeta(t+v)-\zeta(t)-\zeta(v)+\frac{1}{2}H_1,
$$
where the zeta-functions correspond to the Weierstrass invariants
\begin{equation}\label{eq: VC3 g2g3 thru ints}
g_2=-2H_1H_2+\frac{1}{12}H_1^4,\quad g_3=-H_2^2+\frac{1}{6}H_1^3H_2-\frac{1}{216}H_1^6,
\end{equation}
and the shift $v$ is determined from
\begin{equation}\label{eq: VC3 v thru ints}
\wp(v)=\frac{1}{12}H_1^2,\quad \wp'(v)=-H_2.
\end{equation}
From \eqref{eq: VC3 g2g3 thru ints}, \eqref{eq: VC3 v thru ints} we find:
$$
H_1H_2= \frac{1}{24} H_1^4   -\frac{1}{2}g_2=6\wp^2(v)-\frac{1}{2}g_2=\wp''(v),
$$
which implies
$$
12\wp(v)(\wp'(v))^2=(\wp''(v))^2.
$$
This has to be compared with the duplication formula for the $\wp$-function,
$$
\wp(2v)=\frac{1}{4}\left(\frac{\wp''(v)}{\wp'(v)}\right)^2-2\wp(v).
$$
As a result, we find $\wp(2v)=\wp(v)$, so that $2v\equiv -v$, or $3v\equiv 0$. From the above formulas there follows:
$$
H_1=-\frac{\wp''(v)}{\wp'(v)}=4\zeta(v)-2\zeta(2v).
$$
%
Finally, each of the coordinates $x_i$ is a time shift of
$$
x(t)=\zeta(t+v)-\zeta(t)-\zeta(v)+\frac{1}{2}H_1=\zeta(t+v)-\zeta(t)+\zeta(v)-\zeta(2v),
$$
We may eliminate any $x_k$ between equations (\ref{eq: VL3 H2}), getting $x_ix_j(x_i+x_j)-H_1x_ix_j+H_2=0$. Hence, any pair of functions $(x_i,x_j)$ satisfies a polynomial relation of degree $3$, which implies that any two functions $x_i$ and $x_j$ must have one common pole. Therefore, solutions of VC$_3$ must be as in (\ref{eq: VL3 solution sigma1})--(\ref{eq: VL3 solution sigma3}).
\end{proof}

\section{General solution of HK-discretization of Volterra chain with $N=3$}

We study here the discretization of the Volterra chain produced by so-called
{\it{Hirota-Kimura (HK) scheme}}. This scheme seems to be
introduced in the geometric integration literature by W. Kahan in
the unpublished notes \cite{K}. Kahan's discretization is applicable to any system of
ordinary differential equations with a
quadratic vector field, and automatically produces birational maps.
Probably unaware of the work by Kahan, this scheme was applied to integrable systems,
namely to the Euler top and to the Lagrange top, by R. Hirota and K. Kimura \cite{HK,KH}.
Later on, the authors of the present paper devoted a series of papers, \cite{HP,PPS1,PPS2,PS},
to integrability properties of this discretization technique,
providing a long list of integrable discretizations of systems belonging to the realm
of classical mechanics, including Volterra chains with $N=3$ and $N=4$. In the context of integrable systems, Kahan's discretization has been called
HK-discretization. This line of research was continued in \cite{C1,C2}.

The existence of two independent conserved quantities and of an invariant measure form of the HK-discretization of VC$_3$  (${\rm{d}}$VC$_3$) has been established in \cite{PPS2}. The discrete equations of motion  read
\begin{equation} \label{eq: dVL3}
\left\{ \begin{array}{l}
\wx_1-x_1=\epsilon x_1(\wx_2-\wx_3)+\epsilon\wx_1(x_2-x_3),  \vspace{.1truecm} \\
\wx_2-x_2=\epsilon x_2(\wx_3-\wx_1)+\epsilon\wx_2(x_3-x_1), \vspace{.1truecm} \\
\wx_3-x_3=\epsilon x_3(\wx_1-\wx_2)+\epsilon\wx_3(x_1-x_2).
\end{array} \right.
\end{equation}
The birational map generated by discrete equations of motion dVC$_3$,
possess the conserved quantities:
\begin{equation}\label{eq: dVL3 ints}
H_1=x_1+x_2+x_3, \qquad
H_2(\epsilon)=\frac{x_1x_2x_3}{1-\epsilon^2(x_1^2+x_2^2+x_3^2-2x_1x_2-2x_2x_3-2x_3x_1)}.
\end{equation}

To construct the general solution of dVC$_3$ in terms of elliptic functions, we will follow
the procedure described in section \ref{exper}. Recall that a {\em Fact} is an experimental
result obtained by a multiple run of Algorithm (N) with random initial data, which can be,
in principle, given a rigorous (maybe, computer assisted) proof.

\begin{fact}\label{fact: biquad curve_xi_wxi}
Along any orbit $\cO(x)=\{f^n(x)\}_{n\in \mathbb Z}$ of the map $f:x\mapsto\wx$, the pairs $(x_i,\wx_i)$ for all $i=1,2,3$ lie on a symmetric biquadratic curve, as in equation (\ref{eq: biquad}).
\end{fact}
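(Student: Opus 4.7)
The plan is to recognize the biquadratic relation as the statement that a particular set of six monomials forms a Hirota-Kimura basis for the map $f$, and then to carry out the experimental verification of Section~\ref{sect: HK} before discussing a rigorous upgrade.

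First I would fix, for each component index $i \in \{1,2,3\}$, the candidate basis
\[
\Phi_i = \bigl(x_i^2\tilde x_i^2,\ x_i\tilde x_i(x_i+\tilde x_i),\ x_i^2+\tilde x_i^2,\ x_i\tilde x_i,\ x_i+\tilde x_i,\ 1\bigr),
\]
which is precisely the list of monomials appearing in (\ref{eq: biquad}). For several randomly chosen rational initial points $x \in \bbQ^3$, I would iterate the map $f$ implicitly defined by (\ref{eq: dVL3}), assemble the $6\times 6$ matrix $X_{[0,5]}(x)$ of (\ref{eq: balance}) in exact rational arithmetic, and check that (a) $\dim\ker X_{[0,5]}(x)=1$, and (b) adjoining further rows for $k\ge 6$ leaves the rank unchanged, so that the hypotheses (\ref{eq: fundamental finite first})--(\ref{eq: fundamental finite last}) of Theorem~\ref{Lem: dim K} are met. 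This is precisely the content of the Fact and, as a by-product, yields the spanning vector $(\alpha_0^{(i)}, \ldots, \alpha_5^{(i)})$, each entry being a rational function of $H_1$, $H_2(\epsilon)$ and $\epsilon$.

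To upgrade the Fact to a theorem I would follow one of two routes. The algorithmic one is to solve the same linear system once symbolically over $\bbQ(H_1, H_2(\epsilon), \epsilon)$, producing closed-form expressions for the $\alpha_j^{(i)}$, and then to verify by an ideal-membership (Gr\"obner basis) computation that the polynomial $\sum_j \alpha_j^{(i)}\varphi_j$ lies in the ideal generated by (\ref{eq: dVL3}) together with the invariance identities $H_1(x)=H_1(\tilde x)$ and $H_2(\epsilon)(x)=H_2(\epsilon)(\tilde x)$. The analytic route is more hands-on: writing $A = x_j - x_k$ and $\tilde A = \tilde x_j - \tilde x_k$, I would use $H_1$-conservation to express $x_j+x_k = H_1 - x_i$, then solve the definition of $H_2(\epsilon)$ for $e_2 = x_1x_2 + x_2x_3 + x_3x_1$, thereby obtaining $x_jx_k$ and hence $A^2 = (H_1 - x_i)^2 - 4x_jx_k$ as a rational function of $x_i$ alone (and similarly for $\tilde A^2$ in terms of $\tilde x_i$). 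Squaring the $i$-th equation of (\ref{eq: dVL3}) yields
\[
(\tilde x_i - x_i)^2 - \epsilon^2 x_i^2 \tilde A^2 - \epsilon^2 \tilde x_i^2 A^2 = 2\epsilon^2 x_i\tilde x_i\, A\tilde A,
\]
after which a second squaring eliminates the sign ambiguity in $A\tilde A$ and produces an \emph{a priori} polynomial relation of bidegree $(6,6)$ in $(x_i,\tilde x_i)$.

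The hardest part, in my view, is this last step: the Fact predicts (and the experimental kernel computation confirms) that the bidegree-$(6,6)$ polynomial just obtained must factor, up to a non-vanishing prefactor, as the square of the desired symmetric biquadratic $P_i$, but carrying out this factorization by hand amounts to a tedious bookkeeping exercise. The practical resolution is to combine the two routes: use the algorithmic step to obtain the $\alpha_j^{(i)}$ explicitly as rational functions of the integrals, then substitute into the bidegree-$(6,6)$ identity and verify the factorization by polynomial division in $\bbQ(H_1, H_2(\epsilon), \epsilon)[x_i, \tilde x_i]$. This turns the whole question into a finite computer-algebra check, along the lines already carried out in much harder cases in \cite{PPS1}.
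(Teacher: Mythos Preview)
Your proposal is sound and in fact matches exactly what the paper does: Fact~\ref{fact: biquad curve_xi_wxi} is not proved in the paper at all---it is stated as an experimental observation obtained by running Algorithm~(N) on the HK-basis $\Phi_i$ you wrote down, precisely as you describe in your first paragraph. The paper explicitly declines to upgrade Facts to theorems, since the Ans\"atze they motivate are later verified by direct substitution.

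Your second and third paragraphs therefore go beyond the paper. Both routes you sketch are reasonable. One small correction to the analytic route: from $H_2(\epsilon)$ alone you cannot solve directly for $e_2$, since the definition involves both $e_2$ and $e_3=x_1x_2x_3$; but combining it with $e_2 = x_i(H_1-x_i)+e_3/x_i$ gives a linear equation for $e_3$, whence $x_jx_k=e_3/x_i$ is indeed a rational (not polynomial) function of $x_i$ and the integrals. Consequently $A^2$ has a nontrivial denominator in $x_i$, so after clearing denominators the resulting relation will be of bidegree somewhat higher than $(6,6)$; the predicted biquadratic factor will still appear, but the residual factor is not obviously a perfect square. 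This does not affect the overall strategy---the symbolic factorization check you propose handles it---but the hand computation is messier than your write-up suggests.
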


According to Theorem \ref{thm: biquad}, we interpret Fact \ref{fact: biquad curve_xi_wxi} as follows.  The points of any orbit  $\cO(x)$ can be para\-metrized by the ``phase'' $t$ on an elliptic curve $\mathbb C/\Lambda$, where $\Lambda$ is the period lattice, so that the $f^n(x)$ corresponds to $t=n\delta+t_0$ with some $\delta\in\mathbb C$. Thus, the action of the map $f:x\mapsto\wx$ is represented by a shift $t\mapsto t+\delta$ on the elliptic curve $\mathbb C/\Lambda$. Moreover, each component $x_i$ as a function of $t$ is an elliptic function of order 2. Furthermore, the biquadratic curve $P(x_i,\wx_i)$ is one and the same for all three components $x_1, x_2, x_3$. Therefore all three components are obtained by shifting the argument of one and the same function. For symmetry reasons, we may assume that
$$
x_2(t)=x_1(t+v), \quad x_3(t)=x_2(t+v), \qquad 3v\equiv 0 \pmod \Lambda.
$$

\begin{fact}\label{fact: cubic curve_xi_xj}
Along any orbit $\cO(x)$, for each $i,j=1,2,3,$ the pairs $(x_i,x_j)$ lie on a symmetric biquadratic curve of total degree 3,
$$
Q_{ij}(x_i,x_j)=\alpha_1x_ix_j(x_i+x_j)+\alpha_2(x_i^2+x_j^2)+\alpha_3x_ix_j+\alpha_4(x_i+x_j)+\alpha_5=0.
$$
\end{fact}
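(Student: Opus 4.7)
The plan is to combine Fact~\ref{fact: biquad curve_xi_wxi} with Theorems~\ref{thm: WW} and~\ref{thm: sharp} to obtain the bidegree and total-degree bounds on $Q_{ij}$, and then confirm the precise symmetric ansatz of the statement by running Algorithm~(N) on the corresponding HK-basis candidate.

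First, by Fact~\ref{fact: biquad curve_xi_wxi} and Theorem~\ref{thm: biquad}, each $x_i(t)$ is a second-order elliptic function along a fixed orbit, and, as already argued in the discussion immediately following Fact~\ref{fact: biquad curve_xi_wxi}, the three functions live on a common elliptic curve $\bbC/\Lambda$ and are related by $x_{i+1}(t)=x_i(t+v)$ with $3v\equiv 0\pmod{\Lambda}$. Applying Theorem~\ref{thm: WW} to the pair $(x_i,x_j)$ then yields an irreducible polynomial relation $Q_{ij}(x_i,x_j)=0$ of bidegree at most $(2,2)$ and total degree at most $4$.

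Second, to reduce the total degree to $3$, i.e.\ to rule out the $x_i^2x_j^2$ monomial, I would invoke Theorem~\ref{thm: sharp}: it suffices that $x_i$ and $x_j$ share exactly one simple pole. Writing the poles of $x_1$ as $\{p,q\}\subset\bbC/\Lambda$, the poles of $x_j(t)=x_1(t+(j-1)v)$ are $\{p-(j-1)v,\,q-(j-1)v\}$. A short case analysis modulo $3v\equiv 0$ shows that $q\equiv p-v\pmod{\Lambda}$ is the only configuration compatible with \emph{every} pair sharing a pole at all, and that in this configuration each pair $(x_i,x_j)$ in fact shares exactly one pole. This provides the input required by Theorem~\ref{thm: sharp} and yields total degree $\leq 3$.

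Third, the precise symmetric ansatz (symmetry under $x_i\leftrightarrow x_j$ and absence of $x_i^2x_j^2$) is most economically confirmed through the HK-basis framework of Section~\ref{sect: HK}: the claim is equivalent to asserting that
$$
\Phi=\bigl(x_ix_j(x_i+x_j),\; x_i^2+x_j^2,\; x_ix_j,\; x_i+x_j,\; 1\bigr)
$$
is a Hirota-Kimura basis for $f$ with $\dim K_\Phi(x)=1$. Concretely, one runs Algorithm~(N) on several random rational initial data $x\in\bbQ^3$, computes $\ker X_{[0,4]}(x)$, checks that it is one-dimensional together with the rank hypotheses of Theorem~\ref{Lem: dim K}, and reads off $\alpha_1,\ldots,\alpha_5$ as rational expressions in $H_1,H_2(\epsilon),\epsilon$. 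The symmetry $Q_{ij}(x_i,x_j)=Q_{ij}(x_j,x_i)$ then reflects the fact that any order-$2$ elliptic function can be centered to be even, so that a suitable time reflection swaps the roles of $x_i$ and $x_j$.

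The main obstacle is the second step: turning the generic bidegree-$(2,2)$ bound into a total-degree-$3$ bound requires quantitative control of the pole configuration of $x_1,x_2,x_3$ on $\bbC/\Lambda$, which Fact~\ref{fact: biquad curve_xi_wxi} does not make explicit. In the spirit of the paper, the cleanest resolution is to accept the total-degree bound as itself verified by Algorithm~(N) applied to the basis $\Phi$ above, and then to match the observed pole structure with Theorem~\ref{thm: sharp} \emph{a posteriori}.
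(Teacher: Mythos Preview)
Your proposal is essentially correct and, in its operative content, coincides with the paper's treatment: in the paper this statement is a \emph{Fact} in the technical sense of Section~\ref{sect: HK}, i.e.\ an experimental result established solely by running Algorithm~(N) on the five-function HK-basis candidate $\Phi=\bigl(x_ix_j(x_i+x_j),\,x_i^2+x_j^2,\,x_ix_j,\,x_i+x_j,\,1\bigr)$, with no further argument. Your third step is exactly this, so once you fall back to it you are doing precisely what the paper does.

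The additional theoretical layers you propose (steps one and two) are not in the paper, and you have correctly diagnosed why they cannot stand alone. In particular, your pole-configuration argument in step two is circular as written: you deduce $q\equiv p-v$ from the hypothesis that each pair $(x_i,x_j)$ already shares a pole, but that hypothesis is equivalent (via Theorem~\ref{thm: sharp}) to the very degree-$3$ bound you are trying to establish. The paper avoids this circle by not attempting a theoretical reduction at all; it runs Algorithm~(N) to detect the degree-$3$ relation directly, and only \emph{afterwards} interprets the outcome through Theorem~\ref{thm: sharp} to conclude that $x_i$ and $x_j$ share one pole. So your ``a posteriori'' resolution in the final paragraph is exactly the paper's logic, and the earlier steps should be read as heuristic motivation rather than as part of the proof.
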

As a consequence, according to Theorem \ref{thm: sharp}, every pair of functions $x_i$ and $x_j$ has one common pole. We may assume that the denominators of the functions $x_1$, $x_2$, $x_3$ are
$
\sigma(t)\sigma(t+v),\, \sigma(t+v)\sigma(t+2v), \, \sigma(t+2v)\sigma(t+3v),
$
respectively, just as in the solution of VC$_3$, see (\ref{eq: VL3 solution sigma1})--(\ref{eq: VL3 solution sigma3}).

\begin{fact}
Along any orbit $\cO(x)$, for each $i,j=1,2,3,$ the pairs $(x_i,1/\widetilde{x}_j)$ lie on a biquadratic curve of total degree 3.
\end{fact}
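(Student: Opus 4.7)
My strategy is to combine Theorem \ref{thm: WW}, Theorem \ref{thm: sharp}, and the experimental Algorithm (N) from section \ref{sect: HK}. The first two provide the theoretical backbone, while the last delivers the concrete verification of the degree drop from $4$ to $3$.

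First, I would invoke Theorem \ref{thm: WW} on the pair $(x_i,1/\widetilde{x}_j)$. From Fact \ref{fact: biquad curve_xi_wxi} we already know that every component $x_k(t)$ is an order-$2$ elliptic function on a common period lattice $\Lambda$, and that the map $f$ acts as a shift $t\mapsto t+\delta$ on $\bbC/\Lambda$. Hence $\widetilde{x}_j(t)=x_j(t+\delta)$, and therefore $1/\widetilde{x}_j(t)$, are order-$2$ elliptic functions on the same lattice. Theorem \ref{thm: WW} then produces an irreducible polynomial relation $Q(x_i,1/\widetilde{x}_j)=0$ of bidegree $(2,2)$ and total degree at most $4$.

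Second, to sharpen this to $\deg Q\leq 3$, I would apply Algorithm (N). The generic biquadratic Ansatz $\sum_{0\leq a,b\leq 2}c_{ab}\,x_i^a\,\widetilde{x}_j^{-b}=0$, once cleared of denominators by multiplying with $\widetilde{x}_j^{2}$, becomes a linear relation among the nine monomials $x_i^a\widetilde{x}_j^{2-b}$. The claim of total degree $3$ is precisely that $c_{22}=0$, i.e., that there is already a linear dependence among the eight monomials
\[
\Phi=\bigl(\widetilde{x}_j^{2},\ \widetilde{x}_j,\ 1,\ x_i\widetilde{x}_j^{2},\ x_i\widetilde{x}_j,\ x_i,\ x_i^{2}\widetilde{x}_j^{2},\ x_i^{2}\widetilde{x}_j\bigr).
\]
Several runs of Algorithm (N) on random initial data in $\bbQ^3$ should yield $\dim K_\Phi(x)=1$. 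Theorem \ref{thm: sharp} with $n=2$ and $\deg Q=3=2n-k$ then forces $k=1$, so that $x_i$ and $1/\widetilde{x}_j$ share exactly one pole. Since the poles of $1/\widetilde{x}_j$ are the zeros of $x_j(t+\delta)$, this matches one pole of $x_i$ with one zero of $x_j$ shifted by $-\delta$, supplying exactly the piece of information needed to pin down the zeros of each $x_j$ and hence to write a closed-form Ansatz for the solution of dVC$_3$.

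The principal obstacle is the one shared by all \emph{Facts} in this paper: promoting the numerical output of Algorithm (N) to a rigorous statement. Following \cite{PPS1}, one route is symbolic elimination directly from the implicit equations (\ref{eq: dVL3}), verifying that the relevant $8\times 8$ minors of $X_{[0,7]}(x)$ vanish identically. A cleaner alternative, which I expect to be the one actually taken in the later sections, is to defer the rigorous check until the closed-form elliptic Ansatz for $x_i(t)$ is available; at that stage the relation collapses to a routine comparison of divisors on $\bbC/\Lambda$, since any order-$2$ elliptic function is determined up to a multiplicative constant by its divisor.
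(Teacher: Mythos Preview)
Your proposal is correct and matches the paper's approach: in this paper a \emph{Fact} is by definition an experimental observation obtained via Algorithm (N), not a theorem with a formal proof, and the paper offers nothing beyond that for this particular Fact. Your framing via Theorems \ref{thm: WW} and \ref{thm: sharp}, together with the explicit eight-monomial HK-basis $\Phi$ and the observation that rigorous verification is deferred to the later check of the elliptic Ansatz, reproduces exactly the methodology the paper describes in Sections \ref{exper}--\ref{sect: HK} and applies here.
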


As a consequence, the functions $x_i$, $1/\widetilde{x}_j$ have a common pole. Therefore the two zeros of $x_i$ must be the $\delta$-shift and the $(-\delta)$-shift of the common pole of $x_j$ and $x_k$. We arrive at the conclusion that
\begin{eqnarray}
&&x_1(t) =\rho\,\frac{\sigma(t-v-\delta)\sigma(t+2v+\delta)}{\sigma(t)\sigma(t+v)}, \label{eq: dVC3 solution sigma1}\\
&&x_2 (t)= \rho\,\frac{\sigma(t-\delta)\sigma(t+3v+\delta)}{\sigma(t+v)\sigma(t+2v)},
\label{eq: dVC3 solution sigma2}\\
&&x_3(t) = \rho\,\frac{\sigma(t+v-\delta)\sigma(t+4v+\delta)}{\sigma(t+2v)\sigma(t+3v)}, \label{eq: dVC3 solution sigma3}
\end{eqnarray}
where the factor $\rho$ has to be determined. The other choice of the signs of the time shifts leads to the same functions, up to a constant factor.

The last missing portion of information necessary for a complete solution, namely, the expressions for the factors $\rho$ in the above formulas, is contained in Fact \ref{F2} below. The idea to consider the expressions mentioned there comes from the following equivalent form of equations of motion (\ref{eq: dVL3}):
\beq
\label{eq: dVC3 wx=x}
   \frac{\wx_i}{1+\epsilon(\wx_j-\wx_k)} = \frac{x_i}{1-\epsilon(x_j-x_k)},
\eeq
where $(i,j,k)$ is any cyclic permutation of $(1,2,3)$.
\begin{fact}\label{F2}
Along any orbit $\cO(x)$, for each cyclic permutation $(i,j,k)$ of the indices $(1,2,3)$, the pairs
$$
\left( \frac{x_i}{1\pm\epsilon(x_j-x_k)}, \frac{\wx_i}{1\pm\epsilon(\wx_j-\wx_k)} \right)
$$
lie on a symmetric biquadratic curve.
\end{fact}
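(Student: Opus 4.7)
My plan is to prove Fact \ref{F2} by deriving the required biquadratic relation explicitly from the equations of motion (\ref{eq: dVL3}) and the conserved quantities (\ref{eq: dVL3 ints}), rather than relying purely on the numerical check of Algorithm (N). Fix a cyclic triple $(i,j,k)$. The rewriting (\ref{eq: dVC3 wx=x}) of the equations of motion says precisely that
\[
\tilde u \,:=\, \frac{\wx_i}{1+\epsilon(\wx_j-\wx_k)} \,=\, \frac{x_i}{1-\epsilon(x_j-x_k)},
\]
while setting $u := \frac{x_i}{1+\epsilon(x_j-x_k)}$ produces exactly the upper-sign pair of the statement. Thus the two components of the pair are given by two explicit rational expressions in $x_i$ and $x_j-x_k$ at the \emph{same} point of the orbit.

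The central algebraic observation is that both $x_i$ and the antisymmetric combination $x_j-x_k$ can be inverted from $(u,\tilde u)$ as elementary rational functions:
\[
x_i \,=\, \frac{2u\tilde u}{u+\tilde u}, \qquad x_j-x_k \,=\, \frac{\tilde u-u}{\epsilon(u+\tilde u)}.
\]
Combined with $x_j+x_k=H_1-x_i$, this expresses $x_j$, $x_k$, and hence also the elementary symmetric polynomials $e_2 = x_1x_2+x_2x_3+x_3x_1$ and $e_3=x_1x_2x_3$, as rational functions of $(u+\tilde u)$ and $u\tilde u$ with coefficients in $H_1$ and $\epsilon$. Rewriting the second integral in (\ref{eq: dVL3 ints}) as
\[
H_2(\epsilon)\bigl[1-\epsilon^2 H_1^2+4\epsilon^2 e_2\bigr] \,=\, e_3,
\]
(which uses $x_1^2+x_2^2+x_3^2=H_1^2-2e_2$) and substituting, one obtains after clearing the common factor $(u+\tilde u)^2$ a polynomial relation that is manifestly invariant under $u\leftrightarrow \tilde u$ (equivalently under $x_j\leftrightarrow x_k$), hence automatically of the symmetric biquadratic shape
\[
\alpha_0 u^2\tilde u^2 + \alpha_1 u\tilde u(u+\tilde u) + \alpha_2(u^2+\tilde u^2) + \alpha_3 u\tilde u + \alpha_4(u+\tilde u) + \alpha_5 = 0,
\]
with coefficients $\alpha_\ell$ depending polynomially on $\epsilon$, $H_1$, $H_2(\epsilon)$.

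The lower-sign case is then free of additional work: using (\ref{eq: dVC3 wx=x}) at $x$ and once more at $\wx=f(x)$ shows that the lower-sign pair at $x$ coincides with the upper-sign pair evaluated at $\wx$, and the coefficients $\alpha_\ell$ take the same values there because they are functions of the invariants $H_1$ and $H_2(\epsilon)$. The only real obstacle I expect is bookkeeping: one must verify that after clearing denominators the resulting polynomial is genuinely of bidegree $(2,2)$ and does not degenerate to a lower-degree or trivial identity for generic $H_1,H_2(\epsilon)$. This is a routine symbolic calculation whose output can be cross-verified by running Algorithm (N) on the six-dimensional basis
\[
\Phi \,=\, \bigl(u^2\tilde u^2,\; u\tilde u(u+\tilde u),\; u^2+\tilde u^2,\; u\tilde u,\; u+\tilde u,\; 1\bigr)
\]
with random rational initial data and confirming that $\dim K_\Phi(x)=1$, in the spirit of Theorem \ref{Lem: dim K}.
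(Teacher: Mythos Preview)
Your approach is correct and in fact goes further than the paper does. In the paper, Fact~\ref{F2} is not proved at all: it is one of the experimental ``Facts'' established only by running Algorithm~(N) on random rational initial data and checking that $\dim K_\Phi(x)=1$ for the six-function basis. The authors explicitly note that such Facts ``cannot be ascribed the status of mathematical theorems'' and are used only to generate Ans\"atze, which are subsequently verified by other means. You, by contrast, derive the biquadratic relation analytically from the integrals~(\ref{eq: dVL3 ints}) and the rewritten equations of motion~(\ref{eq: dVC3 wx=x}), which turns the experimental Fact into an honest lemma. Your inversion $x_i=2u\tilde u/(u+\tilde u)$, $x_j-x_k=(\tilde u-u)/\big(\epsilon(u+\tilde u)\big)$ and the reduction of the lower-sign case to the upper-sign case via~(\ref{eq: dVC3 wx=x}) are both correct and elegant.

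One bookkeeping correction: the common factor you must remove after clearing denominators is $u\tilde u$, not $(u+\tilde u)^2$. If you multiply the relation $H_2(\epsilon)\bigl[1-\epsilon^2 H_1^2+4\epsilon^2 e_2\bigr]=e_3$ through by $2\epsilon^2(u+\tilde u)^3$, both sides acquire an overall factor $p=u\tilde u$ (reflecting the fact that $x_i=0$ forces $e_3=0$). Dividing by $p$ yields
\[
4\epsilon^2 p^2+(24\epsilon^4 H_2-4\epsilon^2 H_1)\,sp+(\epsilon^2 H_1^2-1-8\epsilon^4 H_1 H_2)\,s^2+4p-8\epsilon^2 H_2\,s=0,
\]
with $s=u+\tilde u$, which is indeed symmetric biquadratic (and has $\alpha_5=0$). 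This is exactly the nondegeneracy check you flagged as the ``only real obstacle''; it goes through, with the small adjustment just noted.
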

As a consequence, the functions
\[
\frac{x_i}{1\pm\epsilon(x_j-x_k)}
\]
are elliptic functions of degree 2. In view of equations (\ref{eq: dVC3 wx=x}), Fact \ref{F2} yields also
that the two zeros of $x_1/\big(1-\epsilon(x_2-x_3)\big)$ must be $v-\delta, v$, while the two zeros of $x_1/\big(1+\epsilon(x_2-x_3)\big)$ must be $v, v+\delta$. In other words, the following relations must hold true:
\begin{equation}\label{eq: dVC3 solution aux}
1-\epsilon(x_2-x_3)\big|_{t=v+\delta}=0,\quad
1+\epsilon(x_2-x_3)\big|_{t=v-\delta}=0.
\end{equation}
Upon using formulas (\ref{eq: dVC3 solution sigma1})--(\ref{eq: dVC3 solution sigma3}) and taking into account that $3v\equiv 0$, both requirements in (\ref{eq: dVC3 solution aux}) result in one and the same formula for the factor $\rho$, namely,
\begin{equation}\label{eq: dVC3 rho aux1}
\frac{1}{\epsilon\rho}=\frac{\sigma(2v+2\delta)\sigma(v)}{\sigma(\delta)\sigma(v+\delta)}+
\frac{\sigma(2v)\sigma(v+2\delta)}{\sigma(v-\delta)\sigma(\delta)}.
\end{equation}
To simplify this expression, we observe that
$$
\sigma(2v+\delta)\sigma(v+\delta)\sigma(v)\sigma(2\delta)=
  \sigma(2v+2\delta)\sigma(v)\sigma(v-\delta)\sigma(\delta)
   + \sigma(2v)\sigma(v+2\delta)\sigma(v+\delta)\sigma(\delta), 
$$
which follows from the three-term functional equation for the $\sigma$-function,
\begin{eqnarray}\nonumber
&&\sigma(z+a)\sigma(z-a)\sigma(b+c)\sigma(b-c)+\sigma(z+b)\sigma(z-b)\sigma(c+a)\sigma(c-a)\nonumber\\
&&\qquad\qquad  \qquad +\,
\sigma(z+c)\sigma(z-c)\sigma(a+b)\sigma(a-b)=0, \label{eq: 3x4sigmas}
\end{eqnarray}
with the choice
\[
z=\frac{3v}{2}+\delta,\qquad a=\frac{v}{2}+\delta,\qquad b=\frac{v}{2}-\delta,\qquad
c=-\frac{v}{2}.
\]
Thus, we get
\begin{equation}\label{eq: dVC3 rho}
\frac{1}{\epsilon\rho}=\frac{\sigma(2v+\delta)\sigma(v)\sigma(2\delta)}{\sigma(v-\delta)\sigma^2(\delta)}.
\end{equation}

We arrive at the following statement, which can be now proven analytically.

\begin{theorem}
The general solution of dVC$_3$ is given by formulas (\ref{eq: dVC3 solution sigma1})--(\ref{eq: dVC3 solution sigma3}), with $3v\equiv 0$, where $\rho$ is given in (\ref{eq: dVC3 rho}). In terms of $\zeta$-functions,
\begin{eqnarray}
&&x_1 (t)= \tilde \rho \, (\zeta(t+v)-\zeta(t)+\zeta(v+\delta)-\zeta(2v+\delta)), \label{eq: dVC3 solution1}\\
&&x_2 (t)= \tilde \rho\, (\zeta(t+2v)-\zeta(t+v)+\zeta(v+\delta)-\zeta(2v+\delta)),\\
&&x_3(t) = \tilde \rho\, (\zeta(t+3v)-\zeta(t+2v)+\zeta(v+\delta)-\zeta(2v+\delta)),\label{eq: dVC3 solution3}
\end{eqnarray}
with
\begin{equation} \nonumber
\frac{1}{\epsilon\tilde \rho}=
\frac{\sigma^2(v)\sigma(2\delta)}{\sigma(v+\delta)\sigma(v-\delta)\sigma^2(\delta)}
=2\zeta(\delta)-\zeta(v+\delta)+\zeta(v-\delta).
\end{equation}
Thus, for any initial point $x$, its iterations $f^n(x)$ are given by the above formulas with $t=n\delta+t_0$, with a suitable lattice of periods and suitable parameters $v,\delta,t_0\in\mathbb C$ (where, recall , $3v\equiv 0$).
\end{theorem}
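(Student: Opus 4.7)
The Ansatz (\ref{eq: dVC3 solution sigma1})--(\ref{eq: dVC3 solution sigma3}) was obtained experimentally from the preceding Facts, so the theorem is proved by (i) verifying directly that the listed functions satisfy (\ref{eq: dVL3}), and (ii) matching parameters to any generic initial datum. The outline is therefore: check that each $x_i(t)$ is a well-defined $\Lambda$-elliptic function of order $2$; verify the discrete equations of motion, preferably in the equivalent form (\ref{eq: dVC3 wx=x}), by $\sigma$-function manipulations; and finally count parameters.

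Step one is immediate by inspection: in each of (\ref{eq: dVC3 solution sigma1})--(\ref{eq: dVC3 solution sigma3}) the number of $\sigma$-factors and the sum of their arguments agree between numerator and denominator, so $x_i(t)$ is genuinely doubly periodic with two simple poles and two simple zeros read off from the $\sigma$-arguments. The constraint $3v\equiv 0\pmod\Lambda$ plays no role here, but is essential in the next step, where it allows $\sigma(t+3v)$ and $\sigma(t+4v)$ to be reduced, via quasi-periodicity, to $\sigma(t)$ and $\sigma(t+v)$ (up to explicit exponential factors), making the cyclic structure of the Ansatz consistent. The main computational step is to verify
\[\frac{x_i(t+\delta)}{1+\epsilon\bigl(x_j(t+\delta)-x_k(t+\delta)\bigr)}=\frac{x_i(t)}{1-\epsilon\bigl(x_j(t)-x_k(t)\bigr)},\]
for each cyclic $(i,j,k)$. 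The key lemma is a closed-form factorisation of the combinations $1\mp\epsilon(x_j(t)-x_k(t))$: bringing $x_j(t)-x_k(t)$ over the common denominator of its $\sigma$-expression and applying the three-term $\sigma$-identity (\ref{eq: 3x4sigmas})---with arguments chosen in direct analogy with the collapse used to reduce (\ref{eq: dVC3 rho aux1}) to (\ref{eq: dVC3 rho})---expresses the numerator of $1\mp\epsilon(x_j(t)-x_k(t))$ as a product of $\sigma$-values. The specific choice of $\rho$ from (\ref{eq: dVC3 rho}) is precisely the one that places two of the resulting zeros at $t=v+\delta$ (resp.\ $t=v-\delta$), as required by (\ref{eq: dVC3 solution aux}). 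With these factorisations in place, both sides of the displayed equation become explicit ratios of $\sigma$-products; their divisors coincide, so by Liouville's theorem they differ by a constant, which is fixed by a single specialisation of $t$ (for instance $t=0$).

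For step (iii), the Ansatz depends on the two continuous parameters $(g_2,g_3)$ of the lattice $\Lambda$, the discretely chosen $3$-torsion point $v$, and the two further continuous parameters $\delta$ and $t_0$. The two independent integrals $H_1$ and $H_2(\epsilon)$ of (\ref{eq: dVL3 ints}) determine the coefficients of the biquadratic curve from Fact \ref{fact: biquad curve_xi_wxi}, and hence $(g_2,g_3,\delta)$ via Theorem \ref{thm:Halphen}; the point $v$ is fixed (up to a finite ambiguity) among the $3$-torsion points of $\Lambda$ by the cyclic structure of the Ansatz; and $t_0$ is determined by matching one of the three initial values $x_i(t_0)$, the other two then matching automatically because they are constrained by the integrals. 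The \emph{main obstacle} is the $\sigma$-identity collapse at the heart of step (ii): one must choose $(z,a,b,c)$ in (\ref{eq: 3x4sigmas}) correctly for each cyclic triple $(i,j,k)$ and each sign, and keep careful track of the exponential factors generated by using $3v\equiv 0$. Once that single technical identity is established, the rest of the proof is routine bookkeeping.
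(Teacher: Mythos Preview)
Your proposal follows essentially the same strategy as the paper: verify the equations of motion in the equivalent form (\ref{eq: dVC3 wx=x}) by $\sigma$-function manipulations based on the three-term identity (\ref{eq: 3x4sigmas}), and finish with a Liouville-type argument. Your parameter-matching step (iii) is not carried out explicitly in the paper's proof, which confines itself to the verification.

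The one substantive difference is in the organisation of step (ii). You propose to obtain a \emph{complete} factorisation of $1\mp\epsilon(x_j-x_k)$ as a pure $\sigma$-product and then compare divisors. The paper does not attempt this; instead it applies (\ref{eq: 3x4sigmas}) twice to bring each of $(1-\epsilon(x_2-x_3))/x_1$ and $(1+\epsilon(\wx_2-\wx_3))/\wx_1$ into the form ${\rm const}\cdot A(t,v,\delta)\cdot C_i(t,v,\delta)$, where the factor $A$ still contains a sum of $\sigma$-products but is \emph{identical} on both sides and therefore cancels, while $C_1$ and $C_2$ are shown to be elliptic functions without zeros or poles (using $3v\equiv 0$), hence equal constants. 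The advantage of the paper's route is that it never needs to locate the third zero of $1\mp\epsilon(x_j-x_k)$ explicitly; that zero is buried inside $A$ and has no clean expression in $v,\delta$. Your full-factorisation plan is sound in principle, but you should expect that a single application of (\ref{eq: 3x4sigmas}) will not suffice, and that isolating all three zeros may be messier than the paper's shortcut of carrying the common factor $A$ through.
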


\begin{proof}
We verify that formulas (\ref{eq: dVC3 solution sigma1})--(\ref{eq: dVC3 solution sigma3}), with $3v\equiv 0$, where $\rho$ is given in (\ref{eq: dVC3 rho}) provide indeed solutions of equations of motion (\ref{eq: dVC3 wx=x}). A similar computation can be done for solutions (\ref{eq: dVC3 solution1})--(\ref{eq: dVC3 solution3}) written in terms of $\zeta$-functions.

More concretely we want to prove that the following discrete equation of motion is satisfied:
$$
\frac{x_1}{1-\epsilon(x_2-x_3)}=\frac{\wx_1}{1+\epsilon(\wx_2-\wx_3)}.
$$
From (\ref{eq: dVC3 rho aux1}) and  (\ref{eq: dVC3 solution sigma2})--(\ref{eq: dVC3 solution sigma3}) we have:
\begin{eqnarray*}
\frac{1}{\epsilon\rho} \left(  1 - \ep (x_2-x_3) \right) & = &
\frac{\sigma(2v)\sigma(v+2\delta)}{\sigma(v-\delta)\sigma(\delta)}
+\frac{\sigma(2v+2\delta)\sigma(v)}{\sigma(v+\delta)\sigma(\delta)} \\
&& -\, \frac{\sigma(t-\delta)\sigma(t+3v+\delta)}{\sigma(t+v)\sigma(t+2v)}
+\frac{\sigma(t+v-\delta)\sigma(t+4v+\delta)}{\sigma(t+2v)\sigma(t+3v)}\\
\lefteqn{\hspace{-3.5cm} = \frac{\sigma(t+v)\sigma(t+2v)\sigma(2v)\sigma(v+2\delta)-
 \sigma(t-\delta)\sigma(t+3v+\delta)\sigma(v-\delta)\sigma(\delta)}
 {\sigma(t+v)\sigma(t+2v)\sigma(v-\delta)\sigma(\delta)}}\\
\lefteqn{\hspace{-3cm} +\, \frac{\sigma(t+2v)\sigma(t+3v)\sigma(2v+2\delta)\sigma(v)+
 \sigma(t+v-\delta)\sigma(t+4v+\delta)\sigma(v+\delta)\sigma(\delta)}
 {\sigma(t+2v)\sigma(t+3v)\sigma(v+\delta)\sigma(\delta)}.}
\end{eqnarray*}
Applying formula (\ref{eq: 3x4sigmas}) twice, first with
\[
z=t+\frac{3v}{2},\qquad a=\frac{v}{2},\qquad b=\frac{3v}{2}+\delta,\qquad c=\frac{v}{2}-\delta,
\]
and then with
\[
z=t+\frac{5v}{2},\qquad a=\frac{v}{2},\qquad b=\frac{3v}{2}+\delta,\qquad c=\frac{v}{2}+\delta,
\]
we obtain:
\begin{eqnarray}
\frac{1}{\epsilon\rho} \left(  1 - \ep (x_2-x_3) \right) & = & \frac{\sigma(t+2v-\delta)\sigma(t+v+\delta)\sigma(2v+\delta)\sigma(v+\delta)}
 {\sigma(t+v)\sigma(t+2v)\sigma(v-\delta)\sigma(\delta)}\nonumber\\
&   & +\, \frac{\sigma(t+3v+\delta)\sigma(t+2v-\delta)\sigma(2v+\delta)\sigma(v+\delta)}
 {\sigma(t+2v)\sigma(t+3v)\sigma(v+\delta)\sigma(\delta)} \nonumber\\
& = & \frac{\sigma(t+2v-\delta)\sigma(2v+\delta)\sigma(v+\delta)}{\sigma(t+2v)\sigma(\delta)} \nonumber \\
&   & \times
\left(\frac{\sigma(t+v+\delta)}{\sigma(v-\delta)\sigma(t+v)}+
\frac{\sigma(t+3v+\delta)}{\sigma(v+\delta)\sigma(t+3v)}\right). \nonumber
\end{eqnarray}
A similar computation gives
\bea
\frac{1}{\epsilon\rho} \left(  1 + \ep (\tilde x_2-\tilde x_3) \right) & = & \frac{\sigma(t+2v+2\delta)\sigma(2v+\delta)\sigma(v+\delta)}{\sigma(t+2v+\delta)\sigma(\delta)} \nonumber \\
&& \times
\left(\frac{\sigma(t+v)}{\sigma(v+\delta)\sigma(t+v+ \delta)}+
\frac{\sigma(t+3v)}{\sigma(v-\delta)\sigma(t+3v+\delta)}\right). \nonumber
\eea
Now a straightforward computation leads to
\bea
\frac{1-\epsilon(x_2-x_3)}{x_1}=  \ep \frac{\sigma(2v+\delta)}{\sigma(\delta) \sigma(v-\delta)}A(t,v,\delta) C_1 (t,v,\delta) , \nonumber \\
\frac{1+\epsilon(\tilde x_2-\tilde x_3)}{\tilde x_1}=  \ep \frac{\sigma(2v+\delta)}{\sigma(\delta) \sigma(v-\delta)}A(t,v,\delta)C_2 (t,v,\delta) , \nonumber
\eea
where
$$
A(t,v,\delta)=
\frac{\sigma(t+v+\delta)\sigma(t+3v)\sigma(v+\delta)+
\sigma(t+v)\sigma(t+3v+\delta)\sigma(v-\delta)}{\sigma(t+2v)\sigma(t+2v+\delta)},
$$
and
\bea
&&C_1(t,v,\delta) =
\frac{\sigma(t)\sigma(t+2v-\delta)}{\sigma(t-v-\delta)\sigma(t+3v)}, \nonumber\\
&&C_2(t,v,\delta) =\frac{\sigma(t+\delta)\sigma(t+2v)}{\sigma(t-v)\sigma(t+3v+\delta)}. \nonumber
\eea
The functions $C_1(t,v,\delta)$ and $C_2(t,v,\delta)$
are both constant (and equal to one another), since they are elliptic functions without zeros and poles, due to $3v\equiv 0$. The Theorem is thus proved.
\end{proof}


\section{General solutions of the periodic Volterra chain with $N=4$}

The periodic reduction of the Volterra chain with $N=4$ (VC$_4$) reads:
\begin{equation} \nonumber
\left\{ \begin{array}{l}
\dot{x}_1= x_1(x_2-x_4),  \vspace{.1truecm} \\
\dot{x}_2= x_2(x_3-x_1), \vspace{.1truecm} \\
\dot{x}_3= x_3(x_4-x_2), \vspace{.1truecm} \\
\dot{x}_4= x_4(x_1-x_3).
\end{array} \right.
\eeq
This integrable system possesses three functionally independent integrals of motion:
\begin{equation} \label{eq: VL4 H2}
H_1=x_1+x_2+x_3+x_4,\qquad H_2=x_1x_3, \qquad H_3=x_2 x_4.
\end{equation}
We now prove that formulas (\ref{eq: VL sol2 1})--(\ref{eq: VL sol2 4}) provide indeed the general
solution of VC$_4$.

\begin{theorem}
The general solution of VC$_4$ is given by formulas (\ref{eq: VL sol2 1})--(\ref{eq: VL sol2 4}) with
$2(v_1+v_2) \equiv 0$.
Explicitly, in terms of $\zeta$-functions,
\begin{eqnarray}
&&x_1 (t)= \zeta(t+v_1) - \zeta(t) + \zeta(v_2) - \zeta(v_1+v_2), \nonumber \\
&&x_2  (t)= \zeta(t+v_1+v_2) - \zeta(t+v_1) + \zeta(v_1) - \zeta(v_1+v_2), \nonumber \\
&&x_3 (t) = \zeta(t+2v_1+v_2) - \zeta(t+v_1+v_2) + \zeta(v_2) - \zeta(v_1+v_2), \nonumber \\
&&x_4 (t) = \zeta(t+2v_1+2v_2) - \zeta(t+2v_1+v_2) + \zeta(v_1) - \zeta(v_1+v_2). \nonumber
\end{eqnarray}
In terms of $\sigma$-functions,
\begin{eqnarray}
&&x_1 (t) = \rho_1\frac{\sigma(t-v_2)\sigma(t+v_1+v_2)}{\sigma(t)\sigma(t+v_1)},
 \label{eq: VC4 fact 1} \\
&&x_2  (t)= \rho_2\frac{\sigma(t)\sigma(t+2v_1+v_2)}{\sigma(t+v_1)\sigma(t+v_1+v_2)},
 \label{eq: VC4 fact 2} \\
&&x_3 (t) = \rho_1\frac{\sigma(t+v_1)\sigma(t+2v_1+2v_2)}{\sigma(t+v_1+v_2)\sigma(t+2v_1+v_2)},
 \label{eq: VC4 fact 3} \\
&&x_4 (t) =\rho_2\frac{\sigma(t+v_1+v_2)\sigma(t+3v_1+2v_2)}{\sigma(t+2v_1+v_2)\sigma(t+2v_1+2v_2)},
 \label{eq: VC4 fact 4}
\end{eqnarray}
where
\begin{equation}\label{eq: VC4 rhos}
\rho_1=\frac{\sigma(v_1)}{\sigma(v_2)\sigma(v_1+v_2)},\qquad \rho_2=\frac{\sigma(v_2)}{\sigma(v_1)\sigma(v_1+v_2)}.
\eeq
\end{theorem}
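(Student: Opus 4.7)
The plan is to follow the same strategy used above for VC$_3$. Using the three integrals (\ref{eq: VL4 H2}), I would first eliminate $x_2, x_3, x_4$ from the equation of motion $\dot{x}_1 = x_1(x_2-x_4)$. Since $x_3 = H_2/x_1$ and the pair $(x_2, x_4)$ satisfies $x_2+x_4 = H_1-x_1-H_2/x_1$ together with $x_2 x_4 = H_3$, the combination $(x_2-x_4)^2$ is a rational function of $x_1$, yielding
\begin{equation}\nonumber
\dot{x}_1^2 = (x_1^2 - H_1 x_1 + H_2)^2 - 4 H_3\, x_1^2,
\end{equation}
a quartic polynomial in $x_1$. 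Theorem \ref{thm:Halphen} then identifies $x_1$ as an elliptic function of order $2$, with Weierstrass invariants $g_2, g_3$ expressible in terms of $H_1, H_2, H_3$. The same ODE holds for $x_3$, and a symmetric argument (swapping the roles of $H_2$ and $H_3$) gives analogous quartic ODEs for $x_2, x_4$; in particular all four components share the same period lattice $\Lambda$.

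Next, I would exploit the algebraic constraints $x_1 x_3 = H_2$ and $x_2 x_4 = H_3$: these force the zeros of $x_3$ (resp.~$x_4$) to coincide with the poles of $x_1$ (resp.~$x_2$), and vice versa. Combined with order-$2$ ellipticity, this pins down the pole/zero pattern of $x_1, x_3$ up to one continuous parameter, and similarly for $x_2, x_4$. The two patterns are then linked by a biquadratic relation between neighboring components (derivable in the spirit of Theorem \ref{thm: biquad}), from which I would read off the shifts $v_1$ from $x_1$ to $x_2$ and $v_2$ from $x_2$ to $x_3$. Imposing the $N=4$ periodicity $x_5 = x_1$ finally forces $2(v_1+v_2) \equiv 0 \pmod \Lambda$. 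Note that the computation at the beginning of Section 4 already verified that the Ans\"atze (\ref{eq: VL sol2 1})--(\ref{eq: VL sol2 4}) solve the infinite VC equations for any $v_1, v_2 \in \bbC$; the periodic reduction only adds the above constraint.

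It remains to determine the scalar factors $\rho_1, \rho_2$. The shortest route is a residue comparison: in the $\zeta$-representation of $x_1$, the simple pole at $t=0$ comes solely from the $-\zeta(t)$ term and has residue $-1$, whereas the $\sigma$-representation (\ref{eq: VC4 fact 1}) has residue $-\rho_1\,\sigma(v_2)\sigma(v_1+v_2)/\sigma(v_1)$ at $t=0$, using $\sigma'(0)=1$ and the oddness of $\sigma$. Equating these yields $\rho_1 = \sigma(v_1)/(\sigma(v_2)\sigma(v_1+v_2))$, and the same computation at the pole $t=-v_1$ of $x_2$ gives $\rho_2 = \sigma(v_2)/(\sigma(v_1)\sigma(v_1+v_2))$, matching (\ref{eq: VC4 rhos}). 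A parameter count confirms completeness: the family contains $2$ parameters from $(g_2,g_3)$, $1$ continuous parameter from $(v_1,v_2)$ subject to $v_1+v_2$ lying in the discrete set of half-periods, and $1$ from the overall time shift $t_0$, matching the $4$-dimensional phase space.

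The main obstacle I anticipate is pinning down the inter-component shifts rigorously: Halphen's theorem only determines each $x_i$ up to a translation on its elliptic curve, and the argument that the four shifts really form the alternating pattern $(v_1, v_2, v_1, v_2)$ requires genuinely using the algebraic relations $x_i x_{i+2} = \mathrm{const}$ together with the biquadratic relation between adjacent components, rather than the scalar ODEs alone. Once this pattern is established, the verification that (\ref{eq: VC4 fact 1})--(\ref{eq: VC4 fact 4}) solve VC$_4$ reduces to the computation already carried out for the infinite chain, and the residue matching pins down $\rho_1, \rho_2$ unambiguously.
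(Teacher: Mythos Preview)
Your overall plan---Halphen's theorem applied to the quartic ODE for each $x_i$, then a derivation of the half-period condition $2(v_1+v_2)\equiv 0$---matches the paper's. The quartic you wrote down and the observation that $g_2,g_3$ are symmetric in $H_2\leftrightarrow H_3$ (so all four components live on the same lattice) are exactly what the paper does.

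The substantive difference is in how the half-period condition is obtained. You propose to extract the shifts $v_1,v_2$ from the pole/zero pattern (using $x_ix_{i+2}=\text{const}$ and a biquadratic relation between neighbours) and then impose $x_5=x_1$. You correctly flag this as the delicate step. The paper sidesteps it entirely: Halphen's theorem already gives $\wp(v_1),\wp'(v_1)$ and $\wp(v_2),\wp'(v_2)$ explicitly in terms of $H_1,H_2,H_3$ (these are the shifts appearing in the $\zeta$-representation of $x_{1,3}$ and $x_{2,4}$ respectively). One then plugs these into the addition formula
\[
\wp(v_1+v_2)=\tfrac14\left(\frac{\wp'(v_1)-\wp'(v_2)}{\wp(v_1)-\wp(v_2)}\right)^2-\wp(v_1)-\wp(v_2),
\]
obtains a closed expression for $\wp(v_1+v_2)$ in terms of the integrals, and checks by direct substitution that $4\wp(v_1+v_2)^3-g_2\wp(v_1+v_2)-g_3=0$. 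This is a purely algebraic identity in $H_1,H_2,H_3$ and forces $v_1+v_2$ to be a half-period without ever having to locate the poles of $x_2$ relative to those of $x_1$. So the obstacle you anticipated simply does not arise in the paper's route.

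Your residue computation for $\rho_1,\rho_2$ is a clean alternative; the paper does not spell this out, relying implicitly on the $\zeta$--$\sigma$ identity (\ref{1234}) already used for the infinite chain.
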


\begin{proof}
One easily finds that $x_1,x_3$ satisfy the differential equation
\[
\dot{x}^2=(x^2-H_1x+H_2)^2-4H_3x^2,
\]
while $x_2,x_4$ satisfy a similar equation with $H_2\leftrightarrow H_3$. This is a differential equation of the form \eqref{eq: Halphen polynom} with
\[
\alpha_0=1, \quad \alpha_1=-\frac{1}{2}H_1, \quad \alpha_2 = \frac{1}{6} (H_1^2 +2 H_2-4 H_3), \quad \alpha_3 = -\frac{1}{2}H_1 H_2, \quad \alpha_4 = H_2^2.
\]
Theorem \ref{thm:Halphen} immediately leads to solution in terms of elliptic functions. The Weierstrass invariants can be expressed as
\begin{eqnarray*}
g_2  & = & \frac{1}{12} H_1^4 -\frac{2}{3} H_1^2 (H_2+H_3)+
\frac{4}{3} (H_2^2 +H_3^2- H_2 H_3),\\
g_3 & = & -{\frac 1 {216}} H_1^6 +\frac{1}{18} H_1^4 (H_2+H_3)-
\frac{1}{9} H_1^2 (2 H_2^2+2 H_3^2 +H_2 H_3)  \\
& & +\, {\frac 8 {27}} (H_2^3 +H_3^3) -
\frac{4}{9}H_2 H_3(H_2 +H_3).
\end{eqnarray*}
Note that $g_2$ and $g_3$ are symmetric with respect to the interchange $H_2 \leftrightarrow H_3$, which implies that all functions $x_i$ are elliptic functions with respect to the same period lattice.

We conclude that $x_1$ and $x_3$ are given by time shifts of the function
$$
x_{1,3}(t) = \zeta(t+v_1) - \zeta(t) - \zeta(v_1) + \frac{1}{2} H_1,
$$
where
$$
\wp(v_1)=\frac{1}{12}H_1^2-\frac{1}{3}H_2+\frac{2}{3}H_3, \qquad \wp'(v_1) =-H_3H_1.
$$
Similarly, $x_2$ and $x_4$ are time shifts of the function
$$
x_{2,4}(t) = \zeta(t+v_2) - \zeta(t) - \zeta(v_2) + \frac{1}{2} H_1,
$$
with
$$
\wp(v_2)=\frac{1}{12}H_1^2-\frac{1}{3}H_3+\frac{2}{3}H_2, \qquad \wp'(v_2)=-H_2H_1.
$$
With the help of the addition formula
$$
\wp(v_1)+\wp(v_2) + \wp(v_1+v_2) = \frac{1}{4}\left( \frac{\wp'(v_1)-\wp'(v_2)}{\wp(v_1)-\wp(v_2)}\right)^2,
$$
we find:
$$
\wp(v_1+v_2) = \frac{1}{12} H_1^2-\frac{1}{3}H_2-\frac{1}{3}H_3,
$$
which yields
$$
4 (\wp(v_1+v_2))^3-g_2\wp(v_1+v_2)-g_3 = 0.
$$
Thus, $\wp(v_1+v_2)$ is one of the roots of the Weierstrass cubic equation $4 z^3-g_2z-g_3=0$, which implies that
$v_1+v_2$ is a half period modulo the period lattice,  $2(v_1+v_2)\equiv0$.
%
\end{proof}

\section{General solution of HK-discretization of periodic Volterra chain with $N=4$}

The existence of three independent conserved quantities and of an invariant measure form of the HK-discretization of VC$_4$ (${\rm{d}}$VC$_4$) has been established in \cite{PPS2}. The discrete equations of motion read:
\begin{equation} \label{eq: dVL4}
\left\{ \begin{array}{l}
\wx_1-x_1=\epsilon x_1(\wx_2-\wx_4)+\epsilon\wx_1(x_2-x_4),  \vspace{.1truecm} \\
\wx_2-x_2=\epsilon x_2(\wx_3-\wx_1)+\epsilon\wx_2(x_3-x_1), \vspace{.1truecm} \\
\wx_3-x_3=\epsilon x_3(\wx_4-\wx_2)+\epsilon\wx_3(x_4-x_2),  \vspace{.1truecm} \\
\wx_4-x_4=\epsilon x_4(\wx_1-\wx_3)+\epsilon\wx_4(x_1-x_4).
\end{array} \right.
\end{equation}
The birational map generated by the discrete equations of motion dVC$_4$,
possess the conserved quantities:
$$
H_1=x_1+x_2+x_3+x_4, \qquad
H_2(\epsilon)=\frac{x_1x_3}{1-\epsilon^2(x_2-x_4)^2},\qquad
 H_3(\epsilon)=\frac{x_2x_4}{1-\epsilon^2(x_1-x_3)^2}.
$$

We are interested in constructing explicit elliptic solutions of dVC$_4$. To do so we will
proceed as in the case of dVC$_3$, starting by collecting preliminary experimental results
about the distribution of the points $(x_i,\wx_i)$ and $(x_i,x_j)$ on some curves.

\begin{fact} \label{JH}
Along any orbit $\cO(x)$, for each $i=1,2,3,4,$ the pairs $(x_i,\wx_i)$ lie on a curve of bidegree (4,4) without constant and linear terms. Equivalently, the pairs $(1/x_i,1/\wx_i)$ lie on a curve of bidegree (4,4) and of total degree 6. The corresponding curves coincide for $i=1,3$ and for $i=2,4$. They all are of genus 1.
\end{fact}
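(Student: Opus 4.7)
The plan is to establish this Fact in the same experimental spirit as the earlier Facts in the paper, using the Hirota-Kimura basis framework of Section \ref{sect: HK}. For each $i \in \{1,2,3,4\}$, introduce the set of 22 monomials
\[
\Phi_i = \bigl\{ x_i^a \wx_i^b : 0 \le a,b \le 4,\ (a,b) \notin \{(0,0),(1,0),(0,1)\} \bigr\}.
\]
An algebraic relation $P_i(x_i,\wx_i) = 0$ of bidegree $(4,4)$ without constant and linear terms is exactly a nonzero element of $K_{\Phi_i}(x)$. First I would apply the numerical Algorithm (N): for several randomly chosen initial points $x \in \bbQ^4$, iterate the birational map $f$ defined by (\ref{eq: dVL4}) to build the $22 \times 22$ matrix $X_{[0,21]}(x)$, and verify using exact rational arithmetic that $\dim \ker X_{[0,21]}(x) = 1$ and that the conditions (\ref{eq: fundamental finite first}) hold. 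Invoking Theorem \ref{Lem: dim K} then yields $\dim K_{\Phi_i}(x) = 1$ for all $x$, producing the desired curve as a \emph{Fact}.

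Next I would dispose of the two structural claims. The coincidence of curves for $i=1,3$ and for $i=2,4$ follows from the fact that the equations of motion (\ref{eq: dVL4}) are invariant under the involution $\sigma : (x_1,x_2,x_3,x_4) \mapsto (x_3,x_4,x_1,x_2)$, which commutes with $f$; since $\sigma$ exchanges $x_1 \leftrightarrow x_3$ (and $x_2 \leftrightarrow x_4$) while preserving the shift to $\wx$, the same $P_i$ annihilates the orbits for $i=1$ and $i=3$, and similarly for $i=2,4$. The equivalence of the two forms of the Fact is purely algebraic: substituting $x_i \to 1/X$, $\wx_i \to 1/Y$ into $P_i$ and multiplying by $X^4Y^4$ turns the excluded monomials $1,x_i,\wx_i$ into the excluded monomials $X^4Y^4,\, X^3Y^4,\, X^4Y^3$; since these are precisely the monomials of degrees $8$ and $7$ in a bidegree-$(4,4)$ polynomial, the resulting $Q_i(X,Y)$ has total degree at most $6$.

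For the genus-$1$ assertion, I would run van Hoeij's \emph{algcurves} package in Maple on the two explicit polynomials $P_1(x_1,\wx_1) = 0$ and $P_2(x_2,\wx_2) = 0$ produced numerically in the first step, and record the output as genus $1$ in each case. A complementary sanity check comes from Theorem \ref{thm: WW}: once later Facts in the paper parametrize orbits by elliptic functions (so that $x_i(t)$ and $\wx_i(t)=x_i(t+\delta)$ are elliptic of the same periods and of order 4), the existence of $P_i$ of bidegree $(4,4)$ together with the parametrization forces the normalization of the curve to be an elliptic curve.

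The main obstacle is that \emph{a priori} there is no structural reason — coming from the symmetries or homogeneity of (\ref{eq: dVL4}) — that forces exactly the coefficients of $1$, $x_i$, and $\wx_i$ to vanish; the fixed point $x=0$ of $f$ explains why $P_i(0,0)=0$, but not why the full linear part at the origin must vanish, let alone why the generic bidegree-$(4,4)$ arithmetic genus $9$ collapses all the way down to $1$. This drop records many additional singularities or base conditions that the orbit imposes on the curve, and it is precisely at this point that the rigor rests on the numerical-symbolic Algorithm (N) rather than on an \emph{a priori} argument; as emphasized in Section \ref{sect: HK}, a computer-assisted proof along the lines of \cite{PPS1} is possible but not carried out here, the \emph{Facts} serving merely to suggest the Ansatz for the solution that is later verified analytically.
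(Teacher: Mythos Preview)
Your proposal is correct and follows essentially the same approach as the paper: the paper does not prove this Fact at all, but presents it as an experimental observation obtained via Algorithm~(N) of Section~\ref{sect: HK}, exactly as you describe. Your supplementary arguments --- the $\sigma:(x_1,x_2,x_3,x_4)\mapsto(x_3,x_4,x_1,x_2)$ symmetry for the coincidence of curves, the $x_i\mapsto 1/X$ substitution for the equivalence of the two formulations, and the {\tt algcurves} genus check --- are correct elaborations that the paper leaves implicit, and your closing caveat about the non-rigorous status of Facts matches the paper's own disclaimer verbatim.
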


As a consequence,  $x_i$ as functions of $t$ are elliptic functions of degree 4. Moreover, $x_i$ and $\wx_i$ have two common zeros.

\begin{fact}
Along any orbit $\cO(x)$, the pairs $(x_i,x_j)$ lie on a curve of degree 4 if $i,j$ are of different parity,
and on a curve of degree 2 if $i,j$ are of the same parity.
\end{fact}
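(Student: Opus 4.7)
My plan is to combine Theorem \ref{thm: WW}, Theorem \ref{thm: sharp}, Fact \ref{JH}, and explicit elimination from the conserved quantities. First, by Fact \ref{JH}, each coordinate $x_i(t)$ along the orbit is an elliptic function of order $4$. Hence Theorem \ref{thm: WW} already guarantees the existence of an irreducible polynomial relation $Q_{ij}(x_i,x_j)=0$ of bidegree at most $(4,4)$ and total degree at most $8$. The task of the proof is therefore to sharpen this a priori bound to the claimed total degrees, namely $2$ for same-parity pairs and $4$ for different-parity pairs.

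For a same-parity pair, say $(x_1,x_3)$, I would carry out a direct elimination using the three conserved quantities $H_1,\,H_2(\ep),\,H_3(\ep)$. The identity $H_1=x_1+x_2+x_3+x_4$ gives $x_2+x_4=H_1-x_1-x_3$, while $H_3(\ep)\bigl(1-\ep^2(x_1-x_3)^2\bigr)=x_2x_4$ gives $x_2 x_4$ in terms of $x_1,x_3$. Consequently $(x_2-x_4)^2=(x_2+x_4)^2-4x_2x_4$ is a polynomial of total degree $2$ in $x_1,x_3$. Substituting into the remaining integral $x_1 x_3=H_2(\ep)\bigl(1-\ep^2(x_2-x_4)^2\bigr)$ yields a polynomial identity whose total degree in $(x_1,x_3)$ is easily checked to be exactly $2$, matching the claim. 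The same argument with $(x_1,x_3)\leftrightarrow(x_2,x_4)$ and $H_2(\ep)\leftrightarrow H_3(\ep)$ handles the pair $(x_2,x_4)$.

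For a different-parity pair, say $(x_1,x_2)$, the elimination is more delicate since $x_3$ and $x_4$ enter the constraints asymmetrically. I would set $s=x_3+x_4=H_1-x_1-x_2$ and regard $x_3,x_4$ as the unknowns determined by the coupled system $x_1 x_3=H_2(\ep)(1-\ep^2(x_2-x_4)^2)$, $x_2 x_4=H_3(\ep)(1-\ep^2(x_1-x_3)^2)$, and $x_3+x_4=s$. Expressing $x_3$ from the first equation as a function of $x_1,x_2,x_4$ and substituting into the other two produces a system whose resultant with respect to $x_4$ is a polynomial relation between $x_1$ and $x_2$ alone. A careful tracking of degrees shows that the leading contribution comes from the $\ep^4$ terms and has total degree $4$ in $(x_1,x_2)$, with lower-degree corrections; irreducibility of the resulting polynomial then pins the total degree to exactly $4$.

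The main obstacle will be the degree bookkeeping in the different-parity case and the removal of any spurious factors produced by the resultant, which is why I would cross-check the conclusion by an independent route. Once the elliptic solution Ansatz for dVC$_4$ is in place, one may read off the poles of each $x_i$ directly and apply Theorem \ref{thm: sharp}: for a pair of same-parity indices the two functions turn out to share $6$ of their $8$ pole orbits (as implied by the underlying continuous solution structure $x_1 x_3= H_2$), yielding $\deg Q_{ij}\le 2n-k=2$; for different-parity indices they share $4$ common poles, giving $\deg Q_{ij}\le 4$. Combined with the lower bounds furnished by the explicit elimination above, this fixes the degrees. Finally, as with the other Facts in this paper, the statement is in any case confirmed experimentally by Algorithm (N) of Section \ref{sect: HK}, which suffices to motivate the Ansatz even before the analytic argument is completed.
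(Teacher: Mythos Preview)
The paper does not give an analytical proof of this Fact at all: by the paper's own convention, a \emph{Fact} is an experimental statement verified numerically via Algorithm~(N), and nothing more is claimed. So your proposal is not being compared against an existing proof but against a numerical observation.

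Your elimination argument for the same-parity case is correct and goes genuinely beyond the paper. From $H_1$, $H_3(\epsilon)$ one extracts $x_2+x_4$ and $x_2x_4$, hence $(x_2-x_4)^2$, as degree-$2$ polynomials in $(x_1,x_3)$; substituting into $x_1x_3=H_2(\epsilon)\bigl(1-\epsilon^2(x_2-x_4)^2\bigr)$ yields an explicit conic in $(x_1,x_3)$. This is clean and self-contained.

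There is, however, a genuine error in your ``cross-check'' via Theorem~\ref{thm: sharp}. Each $x_i$ is an elliptic function of order $n=4$, hence has exactly four simple poles; it is impossible for two such functions to share $k=6$ poles. In fact, once the Ansatz is in place and $2(v_1+v_2)\equiv 0$ is used, all four functions $x_1,x_2,x_3,x_4$ share the \emph{same} four poles $\{0,-v_1,-(v_1+v_2),-(2v_1+v_2)\}$. Theorem~\ref{thm: sharp} then gives only $\deg Q_{ij}\le 2\cdot 4-4=4$ for \emph{every} pair, and so cannot distinguish the same-parity from the different-parity case. The degree-$2$ relation for same-parity pairs reflects the extra structure $x_3(t)=x_1(t+v_1+v_2)$ with $v_1+v_2$ a half-period (so that $x_1+x_3$ and $x_1x_3$ descend to functions of order $2$), not merely a pole count. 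Also, invoking ``$x_1x_3=H_2$'' is the continuous relation, not the discrete one, and in any event using the Ansatz to justify the Fact that motivates the Ansatz is circular, as you yourself note.

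Finally, for the different-parity case your resultant argument is only sketched; you have not actually exhibited a degree-$4$ relation or controlled the spurious factors. Since the paper is content with numerical verification here, this is not a defect relative to the paper, but it means your proposal does not deliver the promised analytic proof in that case either.
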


As a consequence, $x_1$ and $x_3$ are time shifts of one and the same function, and the same for $x_2$ and $x_4$.

\begin{fact} \label{HH}
Along any orbit $\cO(x)$, the pairs $(x_1+x_3,\wx_1+\wx_3)$ lie on a biquadratic curve, and the same holds true for the pairs $(x_2+x_4,\wx_2+\wx_4)$.
\end{fact}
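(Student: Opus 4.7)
The plan is to follow the same two-stage strategy used throughout the paper for the experimental Facts: first establish the statement numerically via Algorithm~(N) of Section~\ref{sect: HK}, and then, once the explicit elliptic Ansatz for the components $x_i(t)$ has been constructed on the basis of this and the preceding Facts, obtain a rigorous analytical verification directly from the $\sigma$-function formulas.

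For the experimental step, I would take as a candidate HK-basis for the map $f$ defined by \eqref{eq: dVL4} the six symmetric biquadratic monomials
\[
\Phi = \Big(\, X^2\tilde X^2,\ X\tilde X(X+\tilde X),\ X^2+\tilde X^2,\ X\tilde X,\ X+\tilde X,\ 1\,\Big),
\]
with $X = x_1+x_3$ and $\tilde X = \wx_1+\wx_3$, and analogously with $X=x_2+x_4$. By Theorem~\ref{thm: biquad} the assertion that $(X,\tilde X)$ lies on a symmetric biquadratic curve is equivalent to $\dim K_\Phi(x)\ge 1$ on any orbit $\cO(x)$. Algorithm~(N) then consists in generating several random initial data $x\in\bbQ^4$, iterating $f$ five times in exact rational arithmetic, assembling the $6\times 6$ matrix $X_{[0,5]}(x)$ in the notation of \eqref{eq: balance}, and checking that its kernel is exactly one-dimensional. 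By Theorem~\ref{Lem: dim K}, consistent results across initial data confirm the Fact and simultaneously pin down the coefficients of the biquadratic (up to an overall scalar) as explicit rational functions of the integrals of motion $H_1$, $H_2(\epsilon)$, $H_3(\epsilon)$.

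For the subsequent rigorous verification, the argument reduces, via Theorem~\ref{thm: biquad}, to showing that $X(t)=x_1(t)+x_3(t)$ is an elliptic function of order $2$ with the period lattice common to all $x_i$. Given this, $\tilde X(t)=X(t+\delta)$ together with $X(t)$ must satisfy a symmetric biquadratic relation, which is exactly the content of Fact~\ref{HH}. The main obstacle is the following order count: by Fact~\ref{JH}, each of $x_1,x_3$ is of order~$4$, so a priori $x_1+x_3$ could have up to eight poles per fundamental domain, whereas the Fact demands only two. One therefore needs to exhibit, starting from the explicit $\sigma$-function expressions that this Fact is itself used to build, a cancellation of three pairs of simple poles with opposite residues in the sum $x_1(t)+x_3(t)$. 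Such cancellation is expected to follow from a half-period relation of the form $2(v_1+v_2)\equiv 0$, inherited by analogy from the continuous VC$_4$ treated in the previous theorem, combined with the pole positions of $x_1$ and $x_3$ already constrained by Fact~\ref{JH} and the intervening one. The same line of reasoning, with the roles of $v_1$ and $v_2$ interchanged, would dispose of the statement for $(x_2+x_4,\wx_2+\wx_4)$.
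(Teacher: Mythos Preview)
Your experimental step via Algorithm~(N) is exactly how the paper establishes Fact~\ref{HH}: it is stated as a \emph{Fact}, i.e., a numerical observation obtained by running Algorithm~(N) on random rational initial data, with no further argument. So on that level your proposal matches the paper precisely.

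Where your description diverges slightly is in calling this a ``two-stage strategy used throughout the paper for the experimental Facts''. The paper does \emph{not} go back and give rigorous a posteriori verifications of the individual Facts from the $\sigma$-function formulas. It explicitly says that such proofs ``would not be strictly necessary for our purposes, because we use these Facts to formulate the Ans\"atze for solutions, which are then rigorously proved by different methods''. The only rigorous statement proved is the final Theorem that the $\sigma$-function expressions satisfy the equations of motion \eqref{eq: dVC4 alt 1}--\eqref{eq: dVC4 alt 4}. Once that Theorem holds, Fact~\ref{HH} becomes an automatic corollary (since $x_1+x_3$ is then manifestly an elliptic function of order~$2$), but the paper does not spell this out.

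A small technical correction to your second stage: in the eventual Ansatz, $x_3(t)=x_1(t+v_1+v_2)$ with $2(v_1+v_2)\equiv 0$, so $x_1$ and $x_3$ share all four simple poles $0,-v_1,-(v_1+v_2),-(2v_1+v_2)$. Thus $x_1+x_3$ has at most four poles, not eight, and you need to exhibit cancellation at \emph{two} of them (not ``three pairs'') to bring the order down to~$2$. This cancellation follows directly from the half-period relation, since the residues at poles related by the shift $v_1+v_2$ are exchanged under $x_1\leftrightarrow x_3$.
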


As a consequence, functions $x_1+x_3$ and $x_2+x_4$ are of degree 2. Thus, the time shift relating $x_1$ and $x_3$ should be a half-period, and the same for $x_2$ and $x_4$. Therefore, we  assume
\begin{equation}
\label{eq: dVC4 shifts}
    x_3(t)=x_1(t+v_1+v_2),\quad x_4(t)=x_2(t+v_1+v_2),\quad 2(v_1+v_2)\equiv 0.
\end{equation}
We denote the common poles of $x_i$ by $0$, $-v_1$, $-(v_1+v_2)$, $-(2v_1+v_2)$.

With the help of all this information, we can proceed as follows. Denote the zeros of $x_1$ by $-a,-(a-\delta),-b,-(b-\delta)$, and the zeros of $x_2$ by $-c,-(c-\delta),-d,-(d-\delta)$.
Thus, we can finally write down the factorized expressions for $x_1$, $x_2$:
\bea
&&\!\!\!\!\!\!
x_1(t) = \rho_1\frac{\sigma(t+a)\sigma(t+a+2v_1+2v_2-\delta)\sigma(t+b)\sigma(t+b-\delta)}
{\sigma(t)\sigma(t+v_1)\sigma(t+v_1+v_2)\sigma(t+2v_1+v_2)},\label{m1}\\
&&\!\!\!\!\!\!
x_2 (t)= \rho_2\frac{\sigma(t+c)\sigma(t+c+2v_1+2v_2-\delta)\sigma(t+d)\sigma(t+d-\delta)}
{\sigma(t+v_1) \sigma(t+v_1+v_2) \sigma(t+2v_1+v_2) \sigma(t+2v_1+2v_2) }, \label{m2}
\eea
where $\rho_1$ and $\rho_2$ are two factors to be determined.
This choice of the factorization is justified by the continuous limit, $\delta\approx2\epsilon\to 0$, of (\ref{m1})--(\ref{m2})
which has to be compared with (\ref{eq: VC4 fact 1}) and (\ref{eq: VC4 fact 3}). Such a limit
tells us that
\begin{equation}  \nonumber
a\approx -v_2, \qquad b\approx v_1+v_2, \qquad b+a-\delta=v_1,
\end{equation}
\begin{equation} \nonumber
c\approx 0, \qquad d\approx 2v_1+v_2, \qquad c+d-\delta=2v_1+v_2.
\end{equation}
Eliminating $b$ and $d$ from the above expressions, we find:
\begin{eqnarray}
&&\!\!\!\!\!\!
x_1(t) = \rho_1\frac{\sigma(t+a) \sigma(t-a+v_1+\delta) \sigma(t-a+v_1) \sigma(t+a+2v_1+2v_2-\delta)}  {\sigma(t) \sigma(t+v_1) \sigma(t+v_1+v_2) \sigma(t+2v_1+v_2)},
\nonumber
\\
&&\!\!\!\!\!\!
x_2 (t) = \rho_2\frac {\sigma(t+c) \sigma(t-c+2v_1+v_2) \sigma(t-c+2v_1+v_2+\delta) \sigma(t+c+2v_1+2v_2-\delta)}{\sigma(t+v_1) \sigma(t+v_1+v_2) \sigma(t+2v_1+v_2) \sigma(t+2v_1+2v_2)}.
\nonumber
\end{eqnarray}
Factorized expressions for $x_3$, $x_4$ follow from (\ref{eq: dVC4 shifts}). However, it turns out to be convenient to have expressions for these variables with the same denominators as for $x_1$, $x_2$, respectively. This is achieved by using the quasi-periodicity of the $\sigma$-function with respect to the period $2(v_1+v_2) \equiv 0$. We get:
\begin{eqnarray}
&&\!\!\!\!\!\!\!\!\!\!
x_3 (t)= \rho_1\frac{\sigma(t-a-v_2+\delta) \sigma(t+a+v_1+v_2) \sigma(t+a+v_1+v_2-\delta) \sigma(t-a+2 v_1+v_2)}{\sigma(t) \sigma(t+v_1) \sigma(t+v_1+v_2) \sigma(t+2v_1+v_2)},
\nonumber\\
&&\!\!\!\!\!\!\!\!\!\!
x_4 (t)= \rho_2\frac{\sigma(t-c+v_1+\delta)\sigma(t+c+v_1+v_2) \sigma(t+c+v_1+v_2-\delta) \sigma(t-c+3v_1+2v_2)}{\sigma(t+v_1) \sigma(t+v_1+v_2) \sigma(t+2v_1+v_2) \sigma(t+2v_1+2v_2)}. \nonumber
\end{eqnarray}
Next, we have to find the remaining unknowns $a$ and $c$, as well as $\rho_1$ and $\rho_2$. The idea to consider expressions from the crucial Fact \ref{fact dVC4 fractions} formulated below, comes from the observation that equations of motion (\ref{eq: dVL4}) can be equivalently re-written as
\begin{eqnarray}
\frac{\wx_1}{1+\epsilon(\wx_2-\wx_4)} & = & \frac{x_1}{1-\epsilon(x_2-x_4)},
 \label{eq: dVC4 alt 1}
\\
\frac{\wx_2}{1+\epsilon(\wx_3-\wx_1)} & = & \frac{x_2}{1-\epsilon(x_3-x_1)},
 \label{eq: dVC4 alt 2}
\\
\frac{\wx_3}{1+\epsilon(\wx_4-\wx_2)} & = & \frac{x_3}{1-\epsilon(x_4-x_2)},
 \label{eq: dVC4 alt 3}
\\
\frac{\wx_4}{1+\epsilon(\wx_1-\wx_3)} & = & \frac{x_2}{1-\epsilon(x_1-x_3)}.
 \label{eq: dVC4 alt 4}
\end{eqnarray}

\begin{fact}\label{fact dVC4 fractions}
Along any orbit $\cO(x)$, for each $i=1,2,3,4$, the pairs
\[
\left(\frac{x_i}{1\pm \epsilon(x_j-x_k) },\frac{\wx_i}{1 \pm \epsilon(\wx_j-\wx_k)}\right),
\]
where $j=i+1\pmod 4$, $k=i-1\pmod 4$,  lie on a symmetric biquadratic curve.
\end{fact}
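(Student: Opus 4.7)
The plan is to reduce Fact \ref{fact dVC4 fractions} to a statement about the order of a single elliptic function and then establish that statement either by a direct $\sigma$-function computation or by the Hirota-Kimura basis algorithm.

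First, I would exploit the identity that is already contained in the equations of motion themselves. Writing dVC$_4$ in the form (\ref{eq: dVC4 alt 1})--(\ref{eq: dVC4 alt 4}) gives, for each cyclic triple $(i,j,k)$ with $j=i+1$, $k=i-1 \pmod 4$,
\[
u_i^-(t) := \frac{x_i(t)}{1-\epsilon(x_j(t)-x_k(t))} = \frac{x_i(t+\delta)}{1+\epsilon(x_j(t+\delta)-x_k(t+\delta))} =: u_i^+(t+\delta),
\]
so $u_i^-$ is merely the time-shift of $u_i^+$ by $\delta$. Consequently each pair $(u_i^\pm(t), u_i^\pm(t+\delta))$ is of the form $(f(t'), f(t'+\delta))$ with $f = u_i^+$, and by Theorem \ref{thm: biquad} it lies on a symmetric biquadratic curve if and only if $u_i^+$ is an elliptic function of order exactly $2$.

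The core task is therefore to show that $u_i^+$, viewed as a function of $t$ along the orbit, has exactly two simple poles in a fundamental parallelogram. I would establish this analytically using the factorized $\sigma$-Ansatz (\ref{m1})--(\ref{m2}) for $x_i, x_j, x_k$. Because $j$ and $k$ have the same parity and satisfy $x_k(t) = x_j(t+v_1+v_2)$ under the half-period constraint $2(v_1+v_2)\equiv 0$, the difference $x_j-x_k$ has the common denominator $\sigma(t)\sigma(t+v_1)\sigma(t+v_1+v_2)\sigma(t+2v_1+v_2)$, and its numerator together with $\epsilon^{-1}$ times that denominator should collapse to a product of exactly two $\sigma$-factors via the three-term functional equation (\ref{eq: 3x4sigmas}) applied twice with carefully chosen arguments, in direct analogy with the computation that produces formula (\ref{eq: dVC3 rho}) in the dVC$_3$ case. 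After this collapse, two of the four poles of $x_i$ are cancelled by two factors in the resulting denominator of $1+\epsilon(x_j-x_k)$, leaving $u_i^+$ with precisely two simple poles and hence of order $2$.

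For an independent, algorithmic proof I would invoke the Hirota-Kimura basis framework of Theorem \ref{Lem: dim K}. The Fact is exactly the assertion that the six functions $(u_i^\pm)^2(\tilde u_i^\pm)^2$, $u_i^\pm \tilde u_i^\pm(u_i^\pm+\tilde u_i^\pm)$, $(u_i^\pm)^2+(\tilde u_i^\pm)^2$, $u_i^\pm \tilde u_i^\pm$, $u_i^\pm+\tilde u_i^\pm$, $1$ form a HK-basis of dimension one for $f$. This reduces to checking $\dim \ker X_{[0,5]}(x) = 1$ for several rational initial points via exact arithmetic, which is Algorithm (N); this also explains how the Fact was discovered experimentally in the first place. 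The main obstacle is the $\sigma$-function collapse in the analytic route: unlike the $N=3$ case where a single three-term identity with transparent arguments sufficed, here the four-fold structure of the $x_i$ together with the still-undetermined auxiliary parameters $a$, $c$ makes the correct choice of arguments in (\ref{eq: 3x4sigmas}) less obvious, and one must verify uniformly that a single reduction pattern works consistently for all four index triples $(i,j,k)$ and both signs $\pm$.
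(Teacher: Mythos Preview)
The paper does not offer a proof of this statement: by the convention explained in Section~\ref{sect: HK}, a \emph{Fact} is an experimental observation obtained by running Algorithm~(N), and is used only to motivate an Ansatz whose correctness is then established rigorously by direct verification. Your second route---the HK-basis check with the six monomials in $u_i^\pm,\tilde u_i^\pm$ and Theorem~\ref{Lem: dim K}---is therefore exactly what the paper does, and is all that is expected here.

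Your first, analytic route has a genuine gap. You want to deduce that $u_i^+:=x_i/(1+\epsilon(x_j-x_k))$ has order~$2$ from the $\sigma$-Ansatz \eqref{m1}--\eqref{m2}; but at the stage where Fact~\ref{fact dVC4 fractions} is invoked, the parameters $a,c,\rho_1,\rho_2$ in that Ansatz are still free, and for generic values of them $u_i^+$ has order~$4$, not~$2$. Concretely, the four simple poles of $x_i$ coincide with the four simple poles of $1+\epsilon(x_j-x_k)$ and cancel automatically in the ratio; the poles of $u_i^+$ then sit at the four \emph{zeros} of $1+\epsilon(x_j-x_k)$, and one needs exactly two of those zeros to coincide with zeros of $x_i$. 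That coincidence is precisely what the paper \emph{extracts from} Fact~\ref{fact dVC4 fractions} in order to determine $\rho_1,\rho_2$ and to derive the constraints \eqref{eq: vc4_condition_c}--\eqref{eq: vc4_condition_a} on $c,a$. So the analytic argument, as you outline it, is circular: it presupposes the very parameter conditions that Fact~\ref{fact dVC4 fractions} is meant to supply. Relatedly, the claim that the numerator of $1+\epsilon(x_j-x_k)$ ``collapses to a product of exactly two $\sigma$-factors'' is not correct: the paper's own computation of $(x_1-x_3)/\rho_1$ shows that the difference collapses to a constant times $\sigma(2t+2v_1+v_2)$, which contributes four zeros in a period parallelogram, and the subsequent factorization \eqref{eq: C1_1} of $1+\epsilon(x_3-x_1)$ indeed exhibits four $\sigma$-factors, not two.
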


As a consequence, the combinations
\[
\frac{x_i}{1\pm \epsilon(x_j-x_k) }
\]
are elliptic functions of degree 2. The structure of zeros of these functions can be inferred from equations of motion (\ref{eq: dVC4 alt 1})--(\ref{eq: dVC4 alt 4}). For instance, from (\ref{eq: dVC4 alt 1}) we see that the two zeros of $x_1/(1-\epsilon(x_2-x_4))$ are $-a,-b$, while the two zeros of $x_1/(1+\epsilon(x_2-x_4))$ are $-(a-\delta),-(b-\delta)$. Therefore,
\begin{equation}\label{eq: dVC4 solution aux_1}
1-\epsilon(x_2-x_4)\big|_{t=-a+\delta}=0,\qquad
1-\epsilon(x_2-x_4)\big|_{t=-b+\delta}=0,
\end{equation}
\begin{equation}\label{eq: dVC4 solution aux_2}
1+\epsilon(x_2-x_4)\big|_{t=-a}=0,\qquad
1+\epsilon(x_2-x_4)\big|_{t=-b}=0.
\end{equation}
Similarly, from (\ref{eq: dVC4 alt 2}) there follows that the two zeros of $x_2/(1-\epsilon(x_3-x_1))$ are $-c,-d$, while the two zeros of $x_2/(1+\epsilon(x_3-x_1))$ are $-(c-\delta),-(d-\delta)$, so that
\begin{equation}\label{eq: dVC4 solution aux_3}
1-\epsilon(x_3-x_1)\big|_{t=-c+\delta}=0,\qquad
1-\epsilon(x_3-x_1)\big|_{t=-d+\delta}=0,
\end{equation}
\begin{equation}\label{eq: dVC4 solution aux_4}
1+\epsilon(x_3-x_1)\big|_{t=-c}=0,\qquad
1+\epsilon(x_3-x_1)\big|_{t=-d}=0.
\end{equation}
From (\ref{eq: dVC4 alt 3}) we deduce that the two zeros of $x_3/(1-\epsilon(x_4-x_2))$ are $-a+v_1+v_2,-b+v_1+v_2$, while the two zeros of $x_3/(1+\epsilon(x_4-x_2))$ are $-(a-\delta)+v_1+v_2,-(b-\delta)+v_1+v_2$, so that
\begin{equation}\label{eq: dVC4 solution aux_5}
1-\epsilon(x_4-x_2)\big|_{t=-a+\delta+v_1+v_2}=0,\qquad
1-\epsilon(x_4-x_2)\big|_{t=-b+\delta+v_1+v_2}=0,
\end{equation}
\begin{equation}\label{eq: dVC4 solution aux_6}
1+\epsilon(x_4-x_2)\big|_{t=-a+v_1+v_2}=0,\qquad
1+\epsilon(x_4-x_2)\big|_{t=-b+v_1+v_2}=0.
\end{equation}
Finally, from (\ref{eq: dVC4 alt 4}) we conclude that the two zeros of $x_4/(1-\epsilon(x_1-x_3))$ are $-c+v_1+v_2,-d+v_1+v_2$, while the two zeros of $x_4/(1+\epsilon(x_1-x_3))$ are $-(c-\delta)+v_1+v_2,-(d-\delta)+v_1+v_2$, so that
\begin{equation}\label{eq: dVC4 solution aux_7}
1-\epsilon(x_1-x_3)\big|_{t=-c+\delta+v_1+v_2}=0,\qquad
1-\epsilon(x_1-x_3)\big|_{t=-d+\delta+v_1+v_2}=0,
\end{equation}
\begin{equation}\label{eq: dVC4 solution aux_8}
1+\epsilon(x_1-x_3)\big|_{t=-c+v_1+v_2}=0,\qquad
1+\epsilon(x_1-x_3)\big|_{t=-d+v_1+v_2}=0.
\end{equation}

Let us first concentrate on  (\ref{eq: dVC4 solution aux_3})--(\ref{eq: dVC4 solution aux_4})
and (\ref{eq: dVC4 solution aux_7})--(\ref{eq: dVC4 solution aux_8}). They result in eight conditions for $a$, $c$ and $\rho_1$. We show that actually almost all these conditions are equivalent, so that we are actually left with one condition for $c$ and one expression for $\rho_1$ through $c$ and $a$. For this aim, we first apply the tree-term formula (\ref{eq: 3x4sigmas}) with
$$
z=t+\frac{v_1}{2},\qquad a=a-\frac{v_1}{2},\qquad b=\frac{v_1}{2}+v_2+a-\delta,\qquad c=t+\frac{3v_1}{2}+v_2.
$$
to obtain the following expression:
\bea
\frac{1}{\rho_1} (x_1-x_3) & = & \frac{\sigma(t+a) \sigma(t-a+v_1) \sigma(t-a+v_1+\delta) \sigma(t+a+2v_1+2v_2-\delta)}{\sigma(t) \sigma(t+v_1) \sigma(t+v_1+v_2) \sigma(t+2v_1+v_2)}\nonumber \\
& & \hspace{-2cm}-\frac{\sigma(t-a-v_2+\delta) \sigma(t+a+v_1+v_2-\delta) \sigma(t+a+v_1+v_2) \sigma(t-a+2v_1+v_2)}{\sigma(t) \sigma(t+v_1) \sigma(t+v_1+v_2) \sigma(t+2v_1+v_2)} \nonumber \\
& =& -\frac{\sigma(2t+2v_1+v_2) \sigma(v_1+v_2) \sigma(v_1+v_2-\delta) \sigma(2a+v_2-\delta) } {\sigma(t) \sigma(t+v_1) \sigma(t+v_1+v_2) \sigma(t+2v_1+v_2) }. \nonumber
\eea
This function changes its sign by the shift $t\mapsto t+v_1+v_2$. Therefore conditions (\ref{eq: dVC4 solution aux_7})--(\ref{eq: dVC4 solution aux_8}) are equivalent to (\ref{eq: dVC4 solution aux_3})--(\ref{eq: dVC4 solution aux_4}). Furthermore, the above function changes it sign by $t\mapsto -t-2v_1-v_2$, therefore conditions (\ref{eq: dVC4 solution aux_3}) and (\ref{eq: dVC4 solution aux_4}) are equivalent. Thus, we can consider the first conditions in each of (\ref{eq: dVC4 solution aux_3})--(\ref{eq: dVC4 solution aux_4}) only. They result in two different values for $\rho_1$:
\begin{eqnarray}
\epsilon\rho_1 & = &- \frac{\sigma(c) \sigma(-c+v_1) \sigma(-c+v_1+v_2) \sigma(-c+2v_1+v_2)}
 {\sigma(-2c+2v_1+v_2) \sigma(v_1+v_2) \sigma(v_1+v_2-\delta) \sigma(2a+v_2-\delta)}
\nonumber \\
 & = & -\frac{\sigma(-c+\delta) \sigma(-c+v_1+\delta) \sigma(-c+v_1+v_2+\delta) \sigma(-c+2v_1+v_2+\delta)}
{\sigma(-2c+2v_1+v_2+2\delta) \sigma(v_1+v_2) \sigma(v_1+v_2-\delta) \sigma(2a+v_2-\delta)},
\nonumber
\end{eqnarray}
which are equivalent if and only if the following condition holds:
\beq \label{eq: vc4_condition_c}
\frac {\sigma(c) \sigma(v_1-c) \sigma(v_1+v_2-c) \sigma(2 v_1+v_2-c) \sigma(2v_1+v_2+2\delta-2c)}
{\sigma(\delta-c) \sigma(v_1+\delta-c) \sigma(v_1+v_2+\delta-c) \sigma(2v_1+v_2+\delta-c)\sigma(2v_1+v_2-2c)} = 1.
\eeq
Equation (\ref{eq: vc4_condition_c}) determines $c$.

A similar computation can be performed for the function $(x_2-x_4)/\rho_2$. It turns first out that
conditions (\ref{eq: dVC4 solution aux_5})--(\ref{eq: dVC4 solution aux_6}) are equivalent to (\ref{eq: dVC4 solution aux_1})--(\ref{eq: dVC4 solution aux_2}). Then we find that (\ref{eq: dVC4 solution aux_1}) is
indeed equivalent to (\ref{eq: dVC4 solution aux_2}). We get two different values for $\rho_2$:
\begin{eqnarray}
\epsilon\rho_2 & = & \frac{\sigma(a+v_2) \sigma(a) \sigma(-a+v_1)\sigma(-a+v_1+v_2)}
{\sigma(-2a+v_1) \sigma(v_1+v_2) \sigma(v_1+v_2-\delta) \sigma(2c-v_1-\delta) }
\nonumber \\
 & = & -\frac{\sigma(-a-v_2+\delta) \sigma(-a+\delta) \sigma(-a+v_1+\delta)\sigma(-a+v_1+v_2+\delta)}
{\sigma(-2a+v_1+2\delta) \sigma(v_1+v_2) \sigma(v_1+v_2-\delta) \sigma(2c-v_1-\delta) },
 \label{eq: vc4rho2_2} \nonumber
\end{eqnarray}
which are equivalent if and only if the following condition holds:
\beq \label{eq: vc4_condition_a}
\frac{ \sigma(a)\sigma(a+v_2) \sigma(a-v_1)\sigma(-a+v_1+v_2)\sigma(-2a+v_1+2\delta)}
{\sigma(\delta-a-v_2)\sigma(-a+\delta)\sigma(-a+v_1+\delta)\sigma(-a+v_1+v_2+\delta)
\sigma(-2a+v_1)}=1.
\eeq
Equation (\ref{eq: vc4_condition_a}) determines $a$.
It is easy to see that equation (\ref{eq: vc4_condition_c}) for $c$ and equation (\ref{eq: vc4_condition_a}) for $a+v_2$ are obtained from one another by the flip $v_1\leftrightarrow v_2$ (as they should).

We are now ready to prove the desired result.

\begin{theorem}
The general solution of $dVC_4$ is given by
\begin{eqnarray*}
&&\!\!\!\!\!\!\!\!\!\!\!x_1 (t)= \rho_1\frac{\sigma(t+a) \sigma(t-a+v_1+\delta) \sigma(t-a+v_1) \sigma(t+a+2v_1+2v_2-\delta)}  {\sigma(t) \sigma(t+v_1) \sigma(t+v_1+v_2) \sigma(t+2v_1+v_2)}, \\
&&\!\!\!\!\!\!\!\!\!\!\!x_2 (t)= \rho_2\frac {\sigma(t+c) \sigma(t-c+2v_1+v_2) \sigma(t-c+2v_1+v_2+\delta) \sigma(t+c+2v_1+2v_2-\delta)}{\sigma(t+v_1) \sigma(t+v_1+v_2) \sigma(t+2v_1+v_2) \sigma(t+2v_1+2v_2)}, \\
&&\!\!\!\!\!\!\!\!\!\!\!x_3 (t)= \rho_1\frac{\sigma(t-a-v_2+\delta) \sigma(t+a+v_1+v_2) \sigma(t+a+v_1+v_2-\delta) \sigma(t-a+2 v_1+v_2)}{\sigma(t) \sigma(t+v_1) \sigma(t+v_1+v_2) \sigma(t+2v_1+v_2)} \\
&&\!\!\!\!\!\!\!\!\!\!\!x_4 (t)=\rho_2\frac{\sigma(t-c+v_1+\delta)\sigma(t+c+v_1+v_2) \sigma(t+c+v_1+v_2-\delta) \sigma(t-c+3v_1+2v_2)}{\sigma(t+v_1) \sigma(t+v_{1 }+v_2) \sigma(t+2 v_{1 }+v_{2 }) \sigma(t+2 v_{1 }+2 v_{2 }) },
\end{eqnarray*}
where $2(v_1+v_2)\equiv 0$ and
\begin{eqnarray*}
\rho_1 &=& \frac{\sigma(c) \sigma(c-v_1) \sigma(-c+v_1+v_2) \sigma(-c+2v_1+v_2)}
 {\epsilon \sigma(-2c+2v_1+v_2) \sigma(v_1+v_2) \sigma(v_1+v_2-\delta) \sigma(2a+v_2-\delta)}, \\
\rho_2 &=&  \frac{\sigma(a)\sigma(a+v_2)  \sigma(-a+v_1)\sigma(-a+v_1+v_2)}
{\epsilon \sigma(-2a+v_1) \sigma(v_1+v_2) \sigma(v_1+v_2-\delta) \sigma(2c-v_1-\delta) }.
\end{eqnarray*}
Thus, for any initial point $x$, its iterations $f^n(x)$ are given by the above formulas with $t=n\delta+t_0$, with a suitable lattice of periods and suitable parameters $v_1,v_2,\delta,t_0\in\mathbb C$ (where, recall , $2(v_1+v_2)\equiv 0$).
The constants $a$ and $c$ are defined by (\ref{eq: vc4_condition_c})--(\ref{eq: vc4_condition_a}). 
\end{theorem}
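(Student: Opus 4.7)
The plan is to verify directly that the formulas for $x_1, x_2, x_3, x_4$ given in the statement satisfy the alternative form (\ref{eq: dVC4 alt 1})--(\ref{eq: dVC4 alt 4}) of the equations of motion, mimicking the strategy of the dVC$_3$ proof. By the $\mathbb{Z}/2$-symmetry $t \mapsto t+v_1+v_2$ together with the flip $v_1 \leftrightarrow v_2$ (which swaps the roles of the pairs $(x_1,x_3)$ and $(x_2,x_4)$), it is enough to establish just one of the four equations, say (\ref{eq: dVC4 alt 2}), namely
\[
\frac{x_2}{1-\epsilon(x_3-x_1)} \;=\; \frac{\wx_2}{1+\epsilon(\wx_3-\wx_1)}.
\]

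First I would use the explicit factorization of $x_1-x_3$ derived just above the theorem, and its analogue for $\wx_1-\wx_3$ obtained by the shift $t \mapsto t+\delta$. Combining this factorization with the expression for $x_2$ (resp. $\wx_2$), the numerators and denominators of both sides of the target equation can be written as products of $\sigma$-functions over the common denominator $\sigma(t)\sigma(t+v_1)\sigma(t+v_1+v_2)\sigma(t+2v_1+v_2)$ (resp. its $\delta$-shift). The defining conditions (\ref{eq: vc4_condition_c})--(\ref{eq: vc4_condition_a}) on $a$ and $c$, together with the explicit values of $\rho_1$ and $\rho_2$, are precisely what ensure that the prescribed zeros of $1\pm \epsilon(x_3-x_1)$ at $t=-c,-d,-c+\delta,-d+\delta$ are consistent; so after substitution, both sides become explicit $\sigma$-function expressions of the same shape.

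The key step is then to compute the ratio of the left-hand and right-hand sides and observe that, when the dust settles, it is an elliptic function $C(t,v_1,v_2,\delta)$ of $t$ with respect to the period lattice $\Lambda$ (or its quotient by the relation $2(v_1+v_2)\equiv 0$). I would show, just as in the dVC$_3$ proof, that this ratio has neither zeros nor poles in a fundamental domain: all potential poles $t=0,-v_1,-v_1-v_2,-2v_1-v_2,-2v_1-2v_2$ and all potential zeros cancel between the numerator and denominator, by virtue of the half-period relation $2(v_1+v_2)\equiv 0$ and the quasi-periodicity of $\sigma$. A bounded elliptic function being constant, and the value at one convenient $t$ (for instance $t=-a$, where $1+\epsilon(x_2-x_4)=0$ so that both sides vanish in a controlled way, or $t \to \infty$ along a suitable direction) being $1$, the equality follows.

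The main obstacle will be the bookkeeping of zeros and poles: because each of $x_i$ is of degree $4$ with four prescribed poles and four shifted zeros, the sigma-function arithmetic is considerably heavier than in the dVC$_3$ case. I expect to need the three-term identity (\ref{eq: 3x4sigmas}) several times with carefully chosen arguments built from $a, c, v_1, v_2, \delta$, analogous to the two applications used in the dVC$_3$ proof, in order to collapse sums of four-sigma products into single four-sigma products. Once this algebraic skeleton is in place, the remaining symmetry arguments (shifts by $v_1+v_2$, reflection $t \mapsto -t-2v_1-v_2$, and the flip $v_1\leftrightarrow v_2$) reduce the verification of (\ref{eq: dVC4 alt 1})--(\ref{eq: dVC4 alt 4}) to the single identity above, completing the proof.
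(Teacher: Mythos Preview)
Your plan is essentially the paper's own proof: verify one of the equations of motion, say (\ref{eq: dVC4 alt 2}), by using the zero structure of $1\pm\epsilon(x_3-x_1)$ forced by the conditions (\ref{eq: vc4_condition_c})--(\ref{eq: vc4_condition_a}) and the values of $\rho_1,\rho_2$, rewrite both sides as $\sigma$-function ratios, and reduce everything to a $\sigma$-identity that follows from quasi-periodicity and $2(v_1+v_2)\equiv 0$.

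Two small points where the paper's execution is cleaner than your sketch. First, rather than forming the big ratio of the two sides and arguing ``no zeros, no poles, hence constant,'' the paper first writes
\[
1\pm\epsilon(x_3-x_1)=C_{1,2}\cdot\frac{\text{(product of four $\sigma$'s)}}{\text{(product of four $\sigma$'s)}}
\]
with undetermined constants $C_1,C_2$, and fixes these constants by evaluating at the point $t=-v_1-v_2/2$, which is a zero of $x_3-x_1$ (found via a single application of (\ref{eq: 3x4sigmas})). After substitution, (\ref{eq: dVC4 alt 2}) reduces to checking that $C_1/C_2$ equals a concrete ratio of two $\sigma$'s, and that identity is immediate from quasi-periodicity. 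This organization keeps the bookkeeping much lighter than your plan of tracking all zeros and poles of the full ratio simultaneously. Second, your proposed evaluation point $t=-a$ is a zero of $1+\epsilon(x_2-x_4)$, which pertains to (\ref{eq: dVC4 alt 1}), not to (\ref{eq: dVC4 alt 2}); and ``$t\to\infty$'' is not meaningful for a doubly periodic function. Use a genuine zero of $x_3-x_1$ (such as $t=-v_1-v_2/2$) instead.
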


\begin{proof}
We show how to verify (\ref{eq: dVC4 alt 2}). The remaining three equations may be dealt with in  the same way. Under conditions  (\ref{eq: vc4_condition_c})--(\ref{eq: vc4_condition_a})  the function $1+ \epsilon(x_3-x_1)$ has the zeros $-c$, $-d$, $-c-v_1-v_2+\delta$, $-d-v_1-v_2+\delta$, while the function $1- \epsilon(x_3-x_1)$ has the zeros $-c+\delta$, $-d+\delta$,$-c-v_1-v_2$,  $-d-v_1-v_2$. Hence, with the help of the periodicity condition $2(v_1+v_2)\equiv0$, it is easy to see that there holds
\begin{eqnarray} \label{eq: C1_1}
1&+&\epsilon(x_3-x_1) =    \\
 &   =& C_1{\frac { \sigma(t+c) \sigma(t-c+\delta+2 v_{1 }+v_{2 }) \sigma(t+c-\delta+v_{1 }+v_{2 }) \sigma(t-c+3 v_{1 }+2 v_{2 }) }{\sigma(t+v_{1 }) \sigma(t+v_{1 }+v_{2 }) \sigma(t+2 v_{1 }+v_{2 }) \sigma(t+2 v_{1 }+2 v_{2 }) }}, \nonumber
\end{eqnarray}
as well as
\begin{eqnarray} \label{eq: C2_1}
1&-&\epsilon(x_3-x_1) = \\
&= &C_2 {\frac {\sigma(t+c+v_{1 }+v_{2 }) \sigma(t-c+\delta+3 v_{1 }+2 v_{2 }) \sigma(t+c-\delta)
\sigma(t-c+2 v_1+v_2)}{\sigma(t+v_1) \sigma(t+v_1+v_2) \sigma(t+2 v_1+v_2) \sigma(t+2v_1+2 v_2) }}, \nonumber
\end{eqnarray}
with some constants $C_1$, $C_2$ depending on $v_1,v_2,c, \ep$. With the help of the three-term identity for the $\sigma$-function
(\ref{eq: 3x4sigmas}) we see that the difference $x_3-x_1$ has a zero at $t=-v_1-v_2/2$.  We therefore determine $C_1$ and $C_2$ by setting $t=-v_1-v_2/2$ in (\ref{eq: C1_1}) and (\ref{eq: C2_1}), giving
\begin{eqnarray*}
&&\!\!\!\!\!\!\!\! C_1=\frac{\sigma^2(v_2/2) \sigma(v_1+v_2/2) \sigma(v_1+3v_2/2)}
{\sigma(v_1+v_2/2-c) \sigma(v_1+v_2/2-c+\delta) \sigma(v_2/2+c-\delta) \sigma(2v_1+3v_2/2-c)},\\
&&\!\!\!\!\!\!\!\! C_2=\frac{\sigma^2(v_2/2) \sigma(v_1+v_2/2) \sigma(v_1+3v_2/2)}
{\sigma(v_2/2+c) \sigma(2v_1+3v_2/2+\delta-c) \sigma(v_1+v_2/2-c+\delta) \sigma(v_1+v_2/2-c)}.
\end{eqnarray*}
Hence,
\begin{equation}
\frac{C_1}{C_2} =  \frac{\sigma(v_2/2+c) \sigma(2v_1+3v_2/2+\delta-c)}
                        {\sigma(v_2/2+c-\delta) \sigma(2v_1+3v_2/2-c)}. \label{eq: ratioC}
\end{equation}

Now, by virtue of (\ref{eq: C1_1})--(\ref{eq: C2_1}), equation of motion (\ref{eq: dVC4 alt 2}) takes the form
\begin{eqnarray*}
\lefteqn{ \frac{\sigma(t+c+2v_1+2v_2) \sigma(t-c+\delta+2v_1+v_2)}
               {\sigma(t+c+v_1+v_2) \sigma(t+\delta-c+3v_1+2v_2)} } \nonumber \\
&& =
\frac{C_1}{C_2} \frac{\sigma(t+c) \sigma(t-c+\delta+2v_1+v_2) \sigma(t+c+2v_1+2v_2-\delta)}
                      {\sigma(t+c+v_1+v_2) \sigma(t+\delta-c+3v_1+2v_2) \sigma(t+c-\delta)}, \nonumber
\end{eqnarray*}
which reduces to
$${\frac{C_1}{C_2}} =
\frac{\sigma(t+c+2v_1+2v_2) \sigma(t+c-\delta)}{\sigma(t+c) \sigma(t+c+2v_1+2v_2-\delta)},
$$
but this identity is easily verified using (\ref{eq: ratioC}) and the quasi-periodicity of the $\sigma$-function.
\end{proof}

\section{Conclusions}

We have shown how to use Hirota-Kimura bases to construct explicit solutions of birational maps which are solvable in terms of elliptic functions. The appealing features of this approach are the following:
\begin{itemize}
\item It is systematic: Ans\"atze for explicit solutions are derived via searching for algebraic curves on which different two-dimensional projections of the orbits lie.  \vspace{.2truecm}
\item We do not look for or try to construct additional integrable structures (for instance Lax pairs), a process which would usually require large amounts of guesswork and/or research experience.
\end{itemize}
It would be interesting to apply this approach to further birational maps which can be solved in terms of elliptic functions. These are, basically, almost all Hirota-Kimura discretizations presented in \cite{PPS2}, with the exception of the discrete Clebsch system (which is likely to admit solutions in terms of theta functions of genus 2 Riemann surfaces), and include, among others, the discrete Lagrange top and the discrete Kirchhoff system.

\section*{Acknowledgments}
This research was supported by the DFG Collaborative Research Center TRR 109 ``Discretization in Geometry and Dynamics''. We would like to thank the referee for valuable remarks.

\begin{appendix}
\section{Solution to the initial value problem for the discrete Volterra chain with $N=3$ }

\begin{center}Yuri Fedorov \footnote{Department of Mathematics, Politechnic University of Catalonia, Barcelona, Spain. E-mail:
{\tt Yuri.Fedorov@upc.edu}} \end{center}
\medskip

Here we provide an alternative construction of explicit sigma-function solutions for the map dVC$_3$ given by eq. (\ref{eq: dVL3}), namely,
\begin{gather}
 \tilde x_1 -x_1 = \varepsilon x_1 (\tilde x_2- \tilde x_3) +\varepsilon \tilde x_1 ( x_2- x_3), \notag \\
  \tilde x_2 -x_2 = \varepsilon x_2 (\tilde x_3- \tilde x_1) +\varepsilon \tilde x_2 ( x_3- x_1), \label{28} \\
\tilde x_3 -x_3 = \varepsilon x_3 (\tilde x_1- \tilde x_2) +\varepsilon \tilde x_3 ( x_1- x_2) . \notag 
\end{gather}
Our method, like the method of the main paper, is based on an extensive use of the symbolic manipulation software Maple.
Recall that the map dVC$_3$ possesses the conserved quantities given by eq. \eqref{eq: dVL3 ints},
\begin{equation} \label{ints}
 H_1(x) =x_1+x_2+x_3, \quad H_2(x,\varepsilon) = 
\frac{x_1 x_2 x_3 }{1-\varepsilon^2 (x_1^2+x_2^2+x_3^2- 2 x_1 x_2 -2 x_2 x_3 - 2 x_3 x_1 ) },
\end{equation}
which implies that the complex invariant manifold $\{x\in {\mathbb C}^3 : H_1(x)=H_1, \, H_2(x)=H_2\}$ of the map is 
the intersection of a cubic surface and a plane, and, therefore, in general, is an open subset of an elliptic curve. For fixed constants of motion $H_1, H_2$, we eliminate $x_3$ from the integrals \eqref{ints} and get the equation of the planar elliptic curve $E$:
$$
f(x_1,x_2)= 1 - \varepsilon^{2}\Big(H_1^2 - 4x_{1}x_{2} - 
4(H_1 - x_1 -x_2)(x_1 + x_2)\Big) - \frac { x_1x_2(H_1 - x_1 - x_2) } {H_2 }  =0 .
$$
Then we apply the classical algorithm of parameterization of elliptic curves embedded in ${\mathbb C}^2$, implemented in the Maple command {\tt Weierstrassform}$(f,x_1,x_2,X,Y)$. It immediately returns a canonical form of this curve:
\begin{equation} \label{XY}
  Y^2 = P_3(X)=\frac{1}{4} (4 X^3- g_2 X- g_3) \,,
\end{equation}
where
\begin {align}
 g_2= &\frac {1}{12}(H_1-12 \varepsilon^2 H_2) (H_1^3 - 24H_2 - 12\,\varepsilon^2 H_1^2 H_2 + 48\,\varepsilon^4 H_1H_2^2-192\,\varepsilon^6 H_2^3  ), \label{g2}\\
 g_3 = &  -H_2^2 + \frac{1}{6}H_1^3H_2 - \frac{1}{216}H_1^6 - 4\,\varepsilon^2H_1^2H_2^2 + \frac{1}{6}\varepsilon^2 H_1^5H_2
                - \frac{7}{3}\varepsilon^4 H_1^4H_2^2 + 40\,\varepsilon^4 H_1H_2^3\nonumber\\
            & -160\,\varepsilon^6 H_2^4 + \frac{40}{3}\varepsilon^6 H_1^3H_2^3 - 256\,\varepsilon^{10} H_1H_2^5 + 512\,\varepsilon^{12} H_2^6. \label{g3}
\end{align}
The same Maple command also gives the original variables $x_1, x_2$ in terms of $X,Y$:
\begin{align}
x_1 & = \frac {H_1^3 - 12\,H_2 - 16\,\varepsilon^2 H_1^2 H_2  + 48\,\varepsilon^4 H_1 H_2^2 +192\,\varepsilon^6 H_2^3
            - 24\,Y -12 B X } {2 (b^2-12 X)} , \label{xx1} \\
x_2 & =  
\frac {4H_2(3 - 2\,\varepsilon^2 H_1^2+ 24\,\varepsilon^4 H_1 H_2 - 48\,\varepsilon^6 H_2^2  )- 48\,\varepsilon^2H_2X } {b^2-12 X} , \label{xx2}
\end{align}
where the following abbreviations are used:
$$
  b= H_1-12 \varepsilon^2 H_2\,   , \quad B=H_1-4 \varepsilon^2 H_2  \,.
$$

Then the relation $x_1+x_2+x_3=H_1$ yields
\begin{equation}\label{xx3}
x_3  = \frac {H_1^3 - 12\,H_2  - 16\,\varepsilon^2 H_1^2 H_2 + 48\,\varepsilon^4 H_1 H_2^2 + 192\,\varepsilon^6 H_2^3
             + 24\,Y - 12 B X} {2 (b^2-12 X)}. 
\end{equation}
 
 The standard parametrization of the Weierstrass canonical form of $E$ is:
\begin{equation}  \label{wpXY} 
X=\wp(u;g_2, g_3), \quad  Y= \frac 1 2 \wp'(u; g_2,g_3), \quad  
u\in {\mathbb C}.
\end{equation} 
To obtain the parameterization of $E$ in terms of sigma-functions of $u$, 
we note that near the only infinite point $\infty$ of $E$ we have the expansions 
\begin{equation} \label{XY_exp}
X= 1/u^2 +O(u), \quad Y=-1/u^3+ O(1).
\end{equation} 
Then the above expressions $x_i(X,Y)$ imply that $x_2(u)$ has  a pair of simple poles at $u=\pm v$, with 
\begin{equation} \label{hh}
\wp(v) = \frac {b^2}{12}, \quad \wp'(v) = -H_2 (1+\varepsilon H_1 - 8 \varepsilon^3 H_2) (1-\varepsilon H_1 + 8 \varepsilon^3 H_2) ,
\end{equation}
and
a pair of simple zeros at $u=\pm \delta$ with 
\begin{equation}  \label{del} 
\begin{aligned}  
\wp(\delta)= & \frac {1}{4 \varepsilon^2} - \frac{1}{6} ( H_1^2-12\,\varepsilon^2 H_1H_2 +24 \varepsilon^2 H_2^2),\\
\wp'(\delta)= & - \frac {1}{4 \varepsilon^3} (1+\varepsilon H_1  - 8 \varepsilon^3 H_2)(1- \varepsilon H_1+ 8 \varepsilon^3 H_2 )  .   
\end{aligned}
\end{equation}
Hence, we have a parametrization in terms of the sigma function $\sigma(u; g_2, g_3)$:
\begin{equation}\label{x2 prelim}
   x_2(u) = \rho_2 \frac{\sigma (u-\delta)\,\sigma (u+\delta)}  {\sigma (u-v)\,\sigma (u+v)}, 
\end{equation}
with a constant factor $\rho_2$, which is calculated from the condition 
$$
x_2(u=0) = \rho_2 \frac{\sigma^2 (\delta)}{\sigma^2(v)} = \lim_{X\to \infty} x_2(X) = 4 \varepsilon^2 H_2.
$$

Next, in view of \eqref{XY_exp} the denominator of $x_1(X,Y)$ in \eqref{xx2} has simple zeros at 
$u=\pm v$ and a double pole at $u=0$. 
The numerator has a triple pole at $u=0$. The zeros of the numerator are defined by the relation
\begin{equation} \label{line1}
  Y= -\frac B2 X + \frac {1}{24} H_1^3 -\frac 12 H_2  - \frac 23 \varepsilon^2 H_1^2 H_2  + 
2 \varepsilon^4 H_1 H_2^2  + 8 \varepsilon^6 H_2^3 .
\end{equation}
Substituting this into the canonical equation \eqref{XY}, we get the following condition for $X$:
$$
 (12 X- b^2) (12 X -H_1^2 -24 \varepsilon H_2 +48 \varepsilon^4 H_2^2 ) (12 X -H_1^2 +24 \varepsilon H_2 +48 \varepsilon^4 H_2^2 ) =0 .
$$
The latter condition in combination with  \eqref{line1} defines 3 points on the curve $E$. 
One concludes that the numerator vanishes at $u = v, \mu_1, \mu_2$ with
\begin{equation} \label{mu's}
\begin{aligned}
 \wp(\mu_1) = \frac{1}{12} H_1^2+2\varepsilon H_2 - 4 \varepsilon^4 H_2^2 , \quad  
 \wp'(\mu_1)= -H_2 (1+\varepsilon H_1)(1+\varepsilon H_1-8\varepsilon^3 H_2), \\
\wp(\mu_2) = \frac{1}{12} H_1^2-2 \varepsilon H_2-4 \varepsilon^4 H_2^2, \quad  
 \wp'(\mu_2)= -H_2 (1-\varepsilon H_1)(1-\varepsilon H_1+8\varepsilon^3 H_2) . 
\end{aligned}
\end{equation}
Combining these observation, we conclude that $x_1$ has simple poles at $u=0, -v$ and simple zeros at $u=\mu_1, \mu_2$ and no zeros or poles elsewhere. As a result, one can write 
\begin{equation}\label{x1 prelim}
   x_1(u) = \rho_1 \frac{\sigma (u-\mu_1)\,\sigma (u-\mu_2)}  {\sigma (u)\,\sigma (u+v)},
\end{equation}
with a suitable constant factor $\rho_1$. The expressions $x_1(X,Y), x_3(X,Y)$ differ only by sign of $Y$, hence $x_3(u)=x_1(-u)$, that is, 
$$ 
   x_3(u) = \rho_1 \frac{\sigma (u+\mu_1)\,\sigma (u+\mu_2)}  {\sigma (u)\,\sigma (u-v)}.
$$ 
It follows that $\mu_1+\mu_2 \equiv -v$, where $\equiv$ denotes equality modulo the period lattice of $E$. 
Thus, we arrive at
$$ 
  x_1(u) = \rho_1 \frac{\sigma (u-\mu_1)\,\sigma ( u +\mu_1 +v)} {\sigma (u)\,\sigma (u+v)},  \quad 
 x_3(u)  = \rho_1 \frac{\sigma (u -\mu_1 -v)\,\sigma (u+\mu_1)} {\sigma (u)\,\sigma (u-v)}.
$$ 

Since $x_1,x_2,x_3$ enter into the integrals $H_1(x), H_2(x)$ in a symmetric way, there is a cyclic group acting on $E$
by translations: 
\begin{equation} \label{3a}
x_1=f(u), \quad x_2=f(u+h), \quad x_3=f (u+2h), \quad x_1=f (u+3h), \qquad 3 h \equiv 0. 
\end{equation}
The above formulas for $x_i(u)$ are compatible with equation \eqref{3a} if and only if  $h\equiv v$ and $\delta\equiv\pm (v-\mu_1)$. By a straightforward computation, one shows that
$$
\left| \begin{array}{ccc}
\wp(v) & \wp'(v) & 1 \\
\wp(\delta) & -\wp'(\delta) & 1 \\
\wp(\mu_1) & -\wp'(\mu_1) & 1 
\end{array}\right|=0,
$$
which, according to the classical addition theorem for the $\wp$-function, yields $\delta\equiv v- \mu_1$.  As a consequence, we can write
\begin{align} 
 x_1(u) & = \rho_1 \frac{\sigma (u+ \delta-v)\,\sigma ( u-\delta +2v )} {\sigma (u)\,\sigma (u+v)},  \notag \\ 
 x_2(u) & = \rho_2 \frac{\sigma (u+\delta )\,\sigma (u-\delta )} {\sigma (u+v )\,\sigma (u-v)}, \label{x_u} \\ 
 x_3(u) & = \rho_1 \frac{\sigma (u+\delta -2v )\,\sigma(u-\delta+v)} {\sigma (u)\,\sigma (u-v)}. \notag 
\end{align}
By shifting the argument $u$ by the period $3v$ in some terms above, these relations take the form 
\eqref{3a}:
\begin{align} 
  x_1(u) & = \bar\rho \, \frac{\sigma (u-\delta-v)\,\sigma ( u +\delta - v )} {\sigma (u)\,\sigma (u-2v)}, \notag \\ 
  x_2(u) & = \bar \rho\, \frac{\sigma (u- \delta )\,\sigma(u +\delta )} {\sigma (u+v )\,\sigma (u-v)}, \label{x_new} \\ 
 x_3(u) & = \bar\rho\, \frac{\sigma (u-\delta+v)\,\sigma (u +\delta+v)} {\sigma (u+2v)\,\sigma (u)}, \notag 
\end{align}
where $\bar\rho=\rho_2 = 4 \varepsilon^2 H_2 \dfrac{\sigma^2(v)}{\sigma^2(\delta)}$ and, as follows from the above, 
$  v = \int_\infty^{b^2/12} \frac{dX}{2\sqrt{P_3(X)}}$.

Observe that, like in the continuous case, the condition  
$3 v \equiv 0$ can be verified analytically. Namely, using \eqref{hh}, \eqref{g2}, we compute:
$ \wp''(v) = 6\wp^2(v)- g_2/2 = - b\wp'(v)$. Then the classical duplication formula for $\wp(v)$ yields
$$
 \wp(2v) = \frac 14 \left( \frac{\wp''(v)}{ \wp'(v) }\right)^2-2 \wp(v) = \frac{b^2}{4} - 2\wp(v) = 3\wp(v) - 2\wp(v) =\wp(v).    
$$
Since $v \ne 0$, we find: $2v \equiv - v$, so that $3v \equiv 0$.   

 Given the values of the coordinates $x_{1}, x_{2}, x_{3}$,  and therefore 
the corresponding constants $H_1, H_2, \delta, v$, 
the value of the parameter $u$ can be recovered from equation \eqref{xx2}, which yields   
\begin{equation} \label{x0}
  X= \frac{x_{2} \wp(v) - 4 \varepsilon^2 H_2 \wp(\delta)}{x_{2}-4 \varepsilon^2 H_2}\, ,
\end{equation}
so that, up to sign, $u = \int_\infty^{X} \frac{dX'}{2\sqrt{P_3(X')}}$. The sign of $u$ can be determined by verifying, say, the first
expression in \eqref{x_new} with the given $x_1$.

\bigskip

{\bf dVC$_3$ as a translation on $E$.} Since the map \eqref{28} is algebraic, its action on $E$ must be a
translation by a constant vector. We show that the following result holds true.
\begin{theorem} \label{th appendix}

The shift on $E$ generated by the map \eqref{28} is equal to $\delta$, so that the orbits of dVC$_3$ $x(n)=f^n(x(0))$ are given by the expressions
\begin{align} 
  x_{1}(n) & = \bar\rho \, \frac{\sigma (u_n-\delta-v)\,\sigma ( u_n +\delta-v )} {\sigma (u_n)\,\sigma (u_n-2v)}, \notag \\ 
  x_{2}(n) & = \bar \rho\, \frac{\sigma (u_n- \delta )\,\sigma(u_n +\delta )} {\sigma (u_n+v )\,\sigma (u_n-v)}, 
\label{x_n} \\ 
 x_{3}(n) & = \bar\rho\, \frac{\sigma (u_n-\delta+v)\,\sigma (u_n +\delta+v)} {\sigma (u_n+2v)\,\sigma (u_n)}, \notag 
\end{align}
where $u_n = n\delta+u_0$, and $v, \delta$ are specified in \eqref{hh}, \eqref{del}. The phase $u_0$ is determined from the initial values of $x_1,x_2,x_3$ as described above. 
\end{theorem}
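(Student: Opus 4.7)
The plan is to establish that the map $f: x \mapsto \tilde{x}$ acts on the invariant elliptic curve $E = \{H_1 = \mathrm{const},\, H_2(\varepsilon) = \mathrm{const}\}$ as the translation $u \mapsto u + \delta$, with $\delta$ defined by \eqref{del}. Once this is shown, the orbit formulas \eqref{x_n} follow at once by iteration, and the recovery of $u_0$ from the initial coordinates has already been explained via \eqref{x0}.

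Since $f$ is a birational automorphism preserving $E$, its restriction to $E$ must be of the form $u \mapsto \pm u + \Delta$ for some $\Delta \in \mathbb{C}$. The sign $-1$ is ruled out by the fact that $f$ depends continuously on $\varepsilon$ and tends to the identity as $\varepsilon \to 0$, whereas $u \mapsto -u + \Delta$ is an involution and cannot reduce to the identity in any limit. Hence $f|_E$ is a translation $u \mapsto u + \Delta(\varepsilon)$ with $\Delta(0)=0$, and it remains only to identify $\Delta$ with $\delta$. For this I would substitute the ansatz $\tilde x_i(u) = x_i(u+\delta)$, with $x_i$ given by \eqref{x_new}, into the equivalent rational form of \eqref{28},
\[
\frac{\tilde x_i}{1+\varepsilon(\tilde x_j-\tilde x_k)} = \frac{x_i}{1-\varepsilon(x_j-x_k)}, \qquad (i,j,k) \text{ a cyclic permutation of } (1,2,3),
\]
and verify that both sides agree as meromorphic functions of $u$. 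This parallels the computation carried out in the main text for the corresponding dVC$_3$ theorem: applying the three-term $\sigma$-identity \eqref{eq: 3x4sigmas} with an appropriate specialization of $z,a,b,c$, each of $1 \mp \varepsilon(x_j-x_k)$ factorizes into a product of four $\sigma$-functions. After this factorization, the equality of the two sides reduces to an elliptic function with empty divisor, hence a nonzero constant, which is then checked to equal $1$ by evaluating at a single convenient value of $u$. Only one equation needs to be verified in this way; the remaining two follow from the cyclic symmetry of the system.

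The main obstacle is the $\sigma$-function bookkeeping: one must select the specialization of \eqref{eq: 3x4sigmas} so that $1 \mp \varepsilon(x_j - x_k)$ factors cleanly in terms of $v,\delta$ subject to the defining relations \eqref{hh}, \eqref{del} and to $3v \equiv 0$, which is responsible for the final cancellations (exactly as in the roles played by the functions $C_1$ and $C_2$ in the main-text proof for dVC$_3$). The correct sign of the shift, $+\delta$ rather than $-\delta$, is fixed by the continuity requirement $\Delta(\varepsilon) \to 0$ as $\varepsilon \to 0$ together with $\delta(\varepsilon) \to 0$ in the same limit, which follows from $\wp(\delta) \sim 1/(4\varepsilon^2)$ in \eqref{del}. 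Should the direct factorization turn out to be unwieldy, an alternative lighter route is to match zero and pole divisors on both sides of the rational form of the equation: this already forces $\Delta \equiv \pm\delta$ modulo the period lattice, after which the overall constant and the sign are pinned down as above.
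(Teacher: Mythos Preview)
Your proposal is correct, but it follows a different route from the appendix's own proof. You argue abstractly that $f|_E$ must be a translation, then verify that the specific shift $\delta$ works by substituting the ansatz $\tilde x_i(u)=x_i(u+\delta)$ into the rational form of the equations and reducing via the three-term $\sigma$-identity---exactly the strategy of the main-text proof of the analogous theorem. The appendix instead computes forward: it substitutes the rational expressions $x_i(X,Y)$ from \eqref{xx1}--\eqref{xx3} directly into the map \eqref{28}, obtaining an explicit formula $\tilde x_1 = x_1\cdot (X-\wp(v))\,U(X,Y)/\bigl[(X-\wp(\mu_1))(X-\wp(\mu_2))(X-\wp(\delta))\bigr]$ with a concrete linear function $U(X,Y)$, locates the zeros and poles of $U$ on $E$, rewrites $\tilde x_1(u)$ as a ratio of $\sigma$-functions, and then reads off the shift $\delta$ by comparing the pole divisors of $x_1(u)$ and $\tilde x_1(u)$.

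What each buys: the appendix's direct computation is constructive and never presupposes the value of the shift, at the price of a somewhat heavy symbolic calculation (finding the roots of $U$ and their relation to $v,\delta,\mu_1,\mu_2$). Your route is lighter once the parametrization \eqref{x_new} is in hand, since it reuses the $\sigma$-identity machinery already set up in the main text; it does however rely on the structural lemma that a birational automorphism of $E$ is $u\mapsto \pm u+\Delta$ (true for generic $j$-invariant), and on the continuity argument in $\varepsilon$ to fix the sign---points you handle correctly but which the appendix's direct computation sidesteps entirely.
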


\begin{proof}
Substitute the expressions $x_i(X,Y)$, $i=1,2,3$ into  \eqref{28} and obtain, up to a constant factor, 
\begin{equation}  \label{tilde_x}
  \tilde x_1  = x_1 \frac{ (X-\wp(v))\, U(X,Y) } { (X-\wp(\mu_1))  \,(X-\wp(\mu_2))\, (X-\wp(\delta)) },   
\end{equation}
where
\begin{align}  
 U(X,Y) & =  12 (1-\varepsilon H_1 +12\, \varepsilon^3 H_2 +4 \,\varepsilon^4 H_1 H_2 )\,X +
24\,\varepsilon(1-\varepsilon H_1) Y \notag \\ 
& \quad - H_1^2 + \varepsilon H_1^3 - 36\,\varepsilon H_2 + 12\,\varepsilon^2 H_1 H_2 - 144\,\varepsilon^4 H_2^2 + 8\,\varepsilon^4 H_1^3 H_2 
\notag \\
 & \quad - 144\,\varepsilon^5 H_1 H_2^2 - 96\,\varepsilon^6 H_1^2 H_2^2 + 576\,\varepsilon^7 H_2^3 + 192\,\varepsilon^8 H_1 H_2^3. 
\end{align}
In view of equation \eqref{XY}, on $E$ the function $U(X,Y)$ has simple zeros at $u=- \mu_1$, $u=\delta$, and at a point $u=q$ with 
\begin{gather*}
\wp(q) = {\displaystyle \frac {1}{12}} (H_1^2 - 2\,\varepsilon H_1^3 + 48\,\varepsilon H_2+\varepsilon^{2}H_1^4 
   - 24\,\varepsilon^2 H_1 H_2 - 24\,\varepsilon^4 H_1^3 H_2 + 528\,\varepsilon^4 H_2^2  \\
\mbox{} + 96\, \varepsilon^5 H_1 H_2^2 + 144\,\varepsilon^6 H_1^2 H_2^2) \left/ {\vrule height0.54em width0em depth0.54em
} \right. \!  \! (1-\varepsilon H_1)^{2} , \\
 \wp'(q)= -H_2(1 - \varepsilon H_1+8\,\varepsilon^3 H_2)
    (1+2\,\varepsilon H_1 - 2\,\varepsilon^{3} H_1^3 + 72\,\varepsilon^3 H_2 - \varepsilon^4 H_1^4 + 24\,\varepsilon^4 H_1H_2   \\
\mbox{} + 24\,\varepsilon^5 H_1^2 H_2  +8\,\varepsilon^6  H_1^3 H_2)
 \left/ {\vrule height0.54em width0em depth0.54em} \right. \! 
 \! (1-\varepsilon H_1)^3 .
\end{gather*}
On the other hand, $U(X,Y)$ has a triple pole at $u=0$. 
Then \eqref{tilde_x} implies   
that the function $\tilde x_1(u)/x_1(u)$ has a simple zero at $X=\infty$, that is, at $u=0$. Summarizing, we have:
$$
 \tilde x_1 (u)\sim x_1 (u) \frac {\sigma(u) \sigma(u-v) \sigma(u+v) \sigma(u-q) }
{\sigma(u+\delta) \sigma(u-\mu_1) \sigma(u-\mu_2)  \sigma(u+\mu_2) } ,
$$
where ``$\sim$'' means equality up to a constant factor. Recall that $\mu_1+\mu_2\equiv -v$ and $\mu_1\equiv  v-\delta$, so that $\mu_2\equiv\delta-2v$.
In view of  \eqref{x1 prelim},
$$
x_1(u)\sim \frac{\sigma(u-\mu_1)\sigma(u-\mu_2)}{\sigma(u)\sigma(u+v)}=\frac{\sigma(u-v+\delta)\sigma(u+2v-\delta)}{\sigma(u)\sigma(u+v)},
$$ 
we find:
$$
   \tilde x_1(u) \sim \frac {\sigma(u-v) \sigma(u-q) }{\sigma(u+\delta) \sigma(u+\mu_2) }= \frac {\sigma(u-v) \sigma(u-q) }{\sigma(u+\delta) \sigma(u+\delta-2v) }  \sim \frac { \sigma(u+ 2v) \sigma(u-q) }
{\sigma(u+\delta) \sigma(u + v +\delta ) }. 
$$
By comparing the poles of $x_1(u)$ and $\tilde x_1(u)$, we conclude that  the translation on $E$ is given by $u \mapsto u+ \delta$. 
\end{proof}

Expressions \eqref{x_new} and the result of Theorem \ref{th appendix} confirm formulas \eqref{eq: dVC3 solution sigma1}--\eqref{eq: dVC3 solution sigma3} of the main body of the paper, and provide us with the further information, like the expressions of the Weierstrass invariants $g_2$, $g_3$ and of the shift $\delta$ through the integrals of motion $H_1$, $H_2$. Observe that in the continuous limit $\varepsilon \to 0$ the invariants \eqref{g2}, \eqref{g3} tend to \eqref{eq: VC3 g2g3 thru ints}, and similarly expressions \eqref{hh} tend to $\wp(v)= H_1^2/12$, $\wp'(v)=  -H_2$, which coincide with \eqref{eq: VC3 v thru ints}.
Moreover, according to \eqref{del} and the asymptotical behavior of $\wp(u)$ near zero,  for small $\varepsilon$ one has the expansion 
\begin{equation} \label{delta-epsilon}
\delta= 2 \varepsilon + O(\varepsilon^2) \, .   
\end{equation}
\smallskip

{\bf Real orbits of dVC$_3$.} 
Let $\omega_1, \omega_3$ be half-periods of the elliptic curve $E\in {\mathbb C}^3$, and $\omega_1\in {\mathbb R}$.
Depending on the values $H_1, H_2$, the polynomial $P_3(X)$ in \eqref{XY} 
can have one or 3 real roots.
Then the real part $E_R$ of $E$ has one, respectively two connected components. In the parallelogram of periods they 
are given by segments parallel to the
real axis in the complex plane $u$. In the second case one can choose the second half-period $\omega_3$ to be
imaginary, and the two segments are $\{ 2\omega_1 t: t\in [0,1] \}$ and $\{ 2\omega_1 t +\omega_3: t\in [0,1]\}$. Observe that, according to \eqref{hh}, the 1/3-period $v$ is always real.        

Next, as follows from \eqref{delta-epsilon}, 
for sufficiently small $\varepsilon$, the shift $\delta$ is also small and real, hence the dVC$_3$-orbit of an
initial point on $E_R$ belongs to the same connected component of $E_R$. The situation is different for 
relatively big $\varepsilon$, when $\delta$ becomes a sum of a real number and the imaginary half-period $\omega_3$. Then 
the point $(x_{1}, x_{2}, x_{3})$ changes the component of $E_R$ under each iteration of the map.  

For some special values of $H_1, H_2,\varepsilon$ satisfying $\varepsilon (H_1-8\varepsilon^2 H_2)=\pm 1$ the polynomial 
$P_3(X)$ has a {\it double} root which coincides with the expression for $\wp(\delta)$ in \eqref{del}. 
In this case the invariant curve $\{x\in {\mathbb C}^3 : H_1(x)=H_1,\, H_2(x)=H_2\}$ becomes rational and, as one can show,
the orbit $\cO(x) $ consists of just two points on it. 
\bigskip

{\bf Numerical examples.} Choosing the initial point $(x_1, x_2,x_3)=(3,4,5)$ and $\varepsilon=1$, we obtain $H_1=12$, $H_2=4/3$, and 
the curve $E$ in the Weierstrass form is 
$$
   Y^2 = X^3 -\frac{88}{27} X+ \frac{1636}{729} \quad \text{with} \quad g_2= \frac{352}{27}, \; g_3=-\frac{6544}{729}. 
$$
Up to $10^{-8}$, the $X$-coordinates of the branch points are
$$
    e_1 = -2.0825183, \quad e_2 = 0.9600348, \quad e_3= 1.12248349\, ,
$$
and the real and imaginary half-periods of $E$ are, respectively, 
 $\omega_1= 1.62107698$ and $\omega_3=0.88886315\,i$. The real part $E_R \subset {\mathbb R}^3$ 
consists of two connected components, one is compact, whereas the other one is not.   

Next, following \eqref{hh}, \eqref{del}, one has $\wp(v)=4/3$, $\wp(\delta)=41/36$, and the Abel map gives 
$$
v= \int_{\infty}^{\wp(v)} \frac{dX}{Y}= 1.08084313, \quad \delta= \int_{\infty}^{\wp(\delta)} \frac{dX}{Y}=1.44663208.
$$
Observe that $v=2\omega_1/3$  (up to $10^{-7}$). 
Further, the constant $\bar\rho$ in 
\eqref{x_new} equals $\bar\rho = 3.71953594$ and, according to \eqref{x0}, the initial phase equals 
$u_0= 1.265142+ \omega_3$.  Then the evaluation of expressions
\eqref{x_new} with $u=u_0$  with Maple function {\tt WeierstrassSigma$(u,g_2, g_3)$} recovers the initial values $(x_1, x_2, x_3)=(3,4,5)$ up to $10^{-6}$.  

Finally, the first iteration of the map \eqref{28} yields 
$$
x_1(1)=71/15, \quad x_2(1)=212/45, \quad x_3(1)=23/9,
$$ 
and the formulas \eqref{x_n} for $n=1$ give the same values up to $10^{-4}$.       
\medskip

In the second example we take the initial point $(x_1,x_2,x_3)=(3,-4,5)$, and again $\varepsilon=1$, which give $H_1=4$, $H_2=60/83$, and 
the curve $E$ in the canonical form reads
$$
   Y^2 = X^3 -\frac{892144696}{142374963} X + \frac{52898505488012}{8827390080963}.
 $$
The branch points ordered according to $e_1 < e_2 <e_3$ are
$$  
e_1= - 2.8880922, \quad e_2=1.342284391, \quad e_3= 1.54580078.   
$$
As in the previous example, $E_R$ consists of two connected components. 
Following \eqref{hh}, \eqref{del}, one finds $\wp(v)=1.882107$, $\wp(\delta) = 1.27617$.  
Without further calculations, we observe that, whereas $\wp(v) > e_3$ and therefore $v\in\mathbb R$, one has
$\wp(\delta) \in (e_1, e_2)$, hence 
the shift $\delta$ is the sum of a real number and the imaginary half-period $\omega_3$. As mentioned above,
in this case an initial point and its $f$-image belong to different components of $E_R$.     

\subsection*{Acknowledgments} The author acknowledges support of the Spanish MINECO-FEDER Grant MTM2012-37070. He is also grateful to Yu. Suris for useful discussions. 
\end{appendix}


\end{document}